\documentclass[journal]{IEEEtran}

\IEEEoverridecommandlockouts
\makeatletter
\let\MYcaption\@makecaption
\makeatother

\usepackage{subcaption}

\usepackage{siunitx}
\usepackage{cite,comment}
\usepackage{amsmath,amssymb,amsfonts}
\usepackage{bm}
\usepackage{graphicx,enumitem}
\usepackage{textcomp}
\usepackage{xcolor,soul}
\usepackage{balance,mathcomSTEv4_v1}
\usepackage{booktabs}
\usepackage{cleveref}
\usepackage{pgfplots}
\usepackage{pgfplotstable} 

 \usepackage{algorithm}
 \usepackage[scaled=0.92]{helvet}
 \usepackage{algpseudocode}
 \usepackage{balance}
 \usepackage{tikz}
 \usepackage{xspace}
 \usepackage[acronym,shortcuts]{glossaries}
\crefname{section}{Sec.}{Secs.}
\crefname{figure}{Fig.}{Figs.}
\crefname{table}{Tab.}{Tabs.}
\crefname{equation}{Eq.}{Eqs.}

\newtheorem{lemma}{Lemma}
\usepgfplotslibrary{colormaps}
\usepackage{svg}

 \definecolor{bleudefrance}{rgb}{0.19, 0.55, 0.91}
 \definecolor{myDarkGreen}{RGB}{0,100,0} 
 \definecolor{myDarkGray}{gray}{0.25} 

\newtheorem{rem}{Remark}

 \definecolor{myDarkGreen}{RGB}{0,100,0} 
 \definecolor{myDarkBlue}{RGB}{0,0,139} 
 \definecolor{myDarkGray}{gray}{0.25} 

 \newcommand{\algo}{\textsf{FedSwitch}\xspace }

 \usetikzlibrary{positioning}

\makeatletter
\newcommand{\new}[1]{{\textcolor{blue}{#1}}}

\makeatletter
\newcommand\remembertext[2]{
  \immediate\write\@auxout{\unexpanded{\global\long\@namedef{mytext@#1}{#2}}}%
  {\color{black} #2}%
}
\newcommand\recalltext[1]{%
  \new{\ifcsname mytext@#1\endcsname
    \fontsize{10.5}{12.5}\selectfont\@nameuse{mytext@#1}%
  \else
    ``??''
  \fi
}}
\makeatother

\begin{document}
\sloppy
\newacronym{awgn}{AWGN}{additive white Gaussian noise}
\newacronym{5g}{5G}{fifth-generation}
\newacronym{cdf}{CDF}{cumulative distribution function}

\newacronym{simo}{SIMO}{single-input and multiple-output}
\newacronym{ml}{ML}{maximum likelihood}

\newacronym{pdf}{pdf}{probability density function}
\newacronym{psk}{PSK}{phase-shift keying}
\newacronym{nlos}{NLOS}{not line-of-sight}
\newacronym{skc}{SKC}{secret-key capacity}
\newacronym{pmd}{PMD}{probability mass distribution}
\newacronym{los}{LOS}{line-of-sight}
\newacronym{skg}{SKG}{secret-key generation}
\newacronym{ris}{RIS}{reflective-intelligent-surface}
\newacronym{cr}{CR}{challenge-response}
\newacronym{md}{MD}{miss-detection}
\newacronym{fa}{FA}{false-alarm}
\newacronym{pla}{PLA}{physical-layer authentication}
\newacronym{crpla}{CR-PLA}{challenge-response physical-layer authentication}
\newacronym{snr}{SNR}{signal-to-noise ratio}
\newacronym{std}{STD}{standard deviation}
\newacronym{aoa}{AOA}{angle-of-arrival}
\newacronym{ofdm}{OFDM}{orthogonal-frequency-division-multiplexing}
\newacronym{cnn}{CNN}{convolutional-neural-network}
\newacronym{llr}{LLR}{log-likelihood-ratio}
\newacronym{lr}{LR}{learning-rate}
\newacronym{cfr}{CFR}{channel-frequency-response}
\newacronym{ue}{UE}{user-equipment}
\newacronym{bs}{BS}{base-station}
\newacronym{fl}{FL}{federated learning}
\newacronym{fedavg}{FedAvg}{federated-averaging}
\newacronym{iot}{IoT}{Internet-of-Things}
\newacronym{ula}{ULA}{uniform-linear-array}
\newacronym{csi}{CSI}{channel state information}
\newacronym{nn}{NN}{neural-network}
\newacronym{ce}{CE}{cross-entropy}
\newacronym{jrc}{JRC}{joint-region-classification}
\newacronym{roo}{RoO}{Region-of-Origin}
\newacronym{ps}{PS}{parameter server}
\newacronym{pls}{PLS}{physical-layer-security}
\newacronym{lrt}{LRT}{likelihood-ratio-test}
\newacronym{llrt}{LLRT}{log-likelihood-ratio-test}
\newacronym{msgd}{mSGD}{momentum stochastic-gradient-descent}
\newacronym{det}{DET}{detection-error-tradeoff}
\newacronym{sgd}{SGD}{stochastic-gradient-descent}
\newacronym{np}{NP}{Neyman-Pearson}
\newacronym{mlr}{ML}{machine-learning}
\newacronym{rssi}{RSSI}{received-signal-strength-indicator}
\newacronym{crc}{CRC}{Channel-based Region Classification} 

\title{\algo: Federated Region Classification \\ From 
Wireless Channel Measurements 

\thanks{
M.Piana and S. Tomasin are with the Department of Information Engineering, University of Padova (UNIPD), Italy. S. Tomasin is also with the Department of Mathematics, University of Padova, Italy. This work was partially funded by the European Commission through the Horizon Europe/JU SNS project ROBUST-6G (Grant Agreement no. 101139068). S. Rini is with the Department of Electrical and Computer Engineering, National Yang-Ming Chiao-Tung University (NYCU), Hsinchu, Taiwan (e-mail: stefano.rini@nycu.edu.tw) and with the German Aerospace Center (DLR), Oberpfaffenhofen, Germany (e-mail stefano.rini@dlr.de). It was also supported in part by the NSTC grant 114-2923-E-A49 -006 -MY3 and 113-2923-E-A49 -001 -MY2 and by MARC, the MediaTEK Advanced Research Center, with grant number 114A540531. Corresponding author: Mattia Piana, email: mattia.piana@phd.unipd.it.
}}

\author{ Mattia~Piana,~\IEEEmembership{Graduate Student~Member,~IEEE}, \IEEEauthorblockN{Stefano~Rini,~\IEEEmembership{Senior~Member,~IEEE}, and Stefano~Tomasin,~\IEEEmembership{Senior~Member,~IEEE}}}

\maketitle

\begin{abstract}
Wireless \ac{iot} networks can leverage locally observed \ac{csi} at \acp{bs} to perform \emph{region classification}, i.e., infer the region of origin of a transmitter (TX), thereby enabling channel-based authentication.
However, centralizing these measurements incurs substantial communication overhead and requires sharing location-dependent radio fingerprints.
To address these difficulties, we propose an edge-native \ac{fl} framework where \acp{bs} collaboratively train a shared classifier via a \ac{ps} without exposing raw data, thereby supporting physical-layer security tasks like spoofing detection. 
To overcome the statistical heterogeneity arising from diverse \ac{bs} locations and propagation conditions, we introduce \algo, a personalized \ac{fl} algorithm that automatically transitions edge nodes from collaborative training to local fine-tuning upon loss stagnation. 
We provide theoretical convergence guarantees for the proposed collaborative training phase and derive analytical bounds linking classification errors to statistical discrepancies. 
Extensive evaluations on synthetic channel measurements demonstrate that \algo significantly improves region-classification accuracy over standard \ac{fl} baselines. 
\end{abstract}

\glsresetall


\begin{IEEEkeywords}
Region classification; \Ac{fl}; \ac{iot}; \ac{csi}; 
Model personalization; Physical-layer security.
\end{IEEEkeywords}

\section{Introduction}
\label{sec:intro}
%
Wireless \ac{iot} networks often comprise resource-constrained devices, motivating lightweight security mechanisms that complement conventional cryptographic authentication. 
Since wireless channel characteristics depend on the transmitter location, locally observed \ac{csi} can provide physical-layer evidence of whether a transmitter is spatially consistent with an expected region.

In the following, we consider a scenario in which multiple \acp{bs} collect
channel-dependent measurements from users.
We formulate the \ac{crc} problem, where each \ac{bs} uses local channel
observations to classify a TX into predefined regions, supporting security
tasks such as location-aware authentication and rogue-transmitter detection.
%
\remembertext{rev2.6}{
We adopt an \ac{fl} framework in which
\acp{bs} exchange model updates while keeping raw channel measurements local.
Since different propagation conditions induce heterogeneous local datasets,
a single global model may be suboptimal, motivating a personalized federated
approach.}
We propose \algo, which combines collaborative training across \acp{bs} with
local fine-tuning. Evaluations on 3GPP-compliant synthetic channel measurements
show that \algo outperforms standard federated baselines in heterogeneous
deployments.

\smallskip
\noindent
\textbf{{Relevant Literature:}}
We briefly review the main research directions that are relevant to the development of this work.

\noindent
\emph{Region Verification} A simple case of region localization is the problem of determining if a device is in a \ac{roo} or not, which is a binary classification problem.
Such a scenario has been considered in \cite{8798661}, where the authors showed that \ac{nn} trained with a sufficiently high number of samples achieve the performance of the optimal \ac{np}-test. 
In \cite{10005038}, the authors propose FedPos, in which devices localize the transmitter in an indoor environment using its fingerprint.
In particular, they tackle data heterogeneity by training in an \ac{fl} manner only the feature extraction part of the model. On the same line, \cite{10628047} exploits meta-learning to first learn a shared representation across clients, followed by an adaptation phase. Finally, \cite{tasbaz2024feature} proposes to use both \ac{csi} and \ac{rssi} along with \ac{fedavg} to make the system more robust to fluctuations.

\noindent
\emph{Multiple-Stage Federated Learning} The idea of learning a task in multiple stages, usually with a first \ac{fl} stage to learn a shared representation used as a warm starting point, followed by a second stage specifically tailored for the problem at hand, is well established in the literature 
\cite{cheng2021fine,chen2024robust,peng2024fedpft}. In \cite{xu2022fedcorr}, the authors tackle the label noise problem with FedCorr, which first identifies noisy clients, then fine-tunes the global model on only the clean clients, and finally corrects the remaining noisy labels after fine-tuning. Dealing with the label flipping attack, the authors in \cite{11129099} propose FedMAR, a multistage asynchronous aggregation framework that identifies participants with poisoned data and mitigates their negative impact through a roll-back approach. In \cite{wang2023fedbfpt}, the authors show that pretraining using federated methods is effective, especially in healthcare systems that suffer from data privacy issues \cite{liu2020federated}. In \cite{ZHU2026115425}, we have FedERFT, a federated learning framework with server-side staged optimization,  a two-stage approach with, in the first stage, the server aggregates client models to obtain an initial global model. In the second stage, the global encoder is frozen, and only the classification head is fine-tuned, and in the final stage, both the global encoder and the classification head are jointly fine-tuned to further enhance overall performance. In \cite{chen2024feddat}, we have FedDAT, which leverages a Dual-Adapter Teacher to address data heterogeneity by regularizing the client local updates and applying mutual knowledge distillation for an efficient knowledge transfer.

\paragraph*{Physical-Layer Authentication} A problem connecting to region classification is \ac{pla}, where messages are authenticated by looking at the channel characteristics over which the signal is travelling. Assuming to know the position or the region where the legitimate transmitter should be, the receiver can authenticate messages by checking if the channel corresponds to the expected position or region. 
The application of \ac{fl} in \ac{pla} is at its dawn, and to the best of our knowledge, only a few works have been done. In \cite{wang2021collaborative}, we have an \ac{iot} scenario where multiple constrained devices collaborate to authenticate the transmitter using the \ac{fedavg} algorithm. In \cite{halder2023radio}, a distributed anomaly detection system for detecting compromised devices in LoRa-enabled industrial IoT is presented. It exploits device-specific features such as Carrier Frequency Offset (CFO) as the device fingerprint, whose deviations from expected behaviors allow for the detection of attackers.
In \cite{nagia2022federated}, multiple Wi-Fi routers use FedProx \cite{li2020federated} to cooperate in localizing the TX. Finally, in \cite{yazdinejad2021federated}, the authors tackle drone authentication using \ac{fedavg} at the local server and \ac{sgd} performed locally by the drones.


\smallskip
\noindent
\textbf{{Contributions:}}
The main contributions of this paper are:
\begin{itemize}[leftmargin=*]
    \item {Sec. \ref{sec:syst_model}} -- \textbf{
    \ac{crc}
    model for wireless \ac{iot}.}
    We formulate a \ac{crc} problem for wireless \ac{iot} networks, in which multiple \acp{bs} observe uplink channel measurements generated by users transmitting from different physical regions.
    Each \ac{bs} aims to infer the \ac{roo} of a transmission from its local channel observations, so that the resulting estimate can be used as a lightweight physical-layer security mechanism.
    %

    \item  {Sec. \ref{sec:Problem Definition}} -- \textbf{Heterogeneus \ac{fl} formulation.}
    We cast the \ac{crc} task as a \ac{fl} problem, where the \acp{bs} collaboratively train a classifier under the coordination of a \ac{ps} without sharing their recorded channel measurements.
    This formulation avoids the centralization of location-dependent radio fingerprints, reduces the exposure of sensitive channel data, and enables collaborative learning across edge nodes while preserving the distributed nature of the wireless \ac{iot} deployment.

    \item {Sec. \ref{sec:fedloss}} -- \textbf{Proposed Solution: \algo.}
    We introduce \algo, a personalized \ac{fl} algorithm designed for the heterogeneous channel statistics induced by multi-\ac{bs} deployments.
    The proposed method allows each \ac{bs} to participate in collaborative training while the global model is beneficial, and to switch to a local fine-tuning phase when its training loss stagnates.
    In this way, \algo balances the benefits of shared learning with the need for local specialization at edge nodes observing different propagation conditions and region distributions.

    \item {Sec. \ref{sec:Performance Analysis}} -- \textbf{Theoretical analysis of learning and classification performance.}
    We provide a theoretical characterization of the proposed learning framework, including convergence properties of \algo and analytical bounds relating the \ac{crc} performance to the statistical discrepancy among local channel distributions.
    This analysis clarifies the role of data heterogeneity in federated channel-based classification and motivates the use of personalization in the considered wireless \ac{iot} setting.

    \item {Sec. \ref{sec:setup}} -- \textbf{Evaluation on synthetic vehicular channel measurements.}
    We validate the proposed framework using synthetic channel measurements from the 3GPP channel model.
    The numerical results show that channel measurements can support accurate \ac{crc} and that \algo improves over standard \ac{fl} baselines in heterogeneous multi-\ac{bs} deployments.
\end{itemize}

A conceptual representation of the contribution of the paper is provided in Fig. \ref{fig:high level}.

\smallskip
\noindent
{\emph{Notation:}}
Calligraphic letters denote sets. 

\begin{figure*}
    \centering
\includegraphics[width=.8\linewidth]{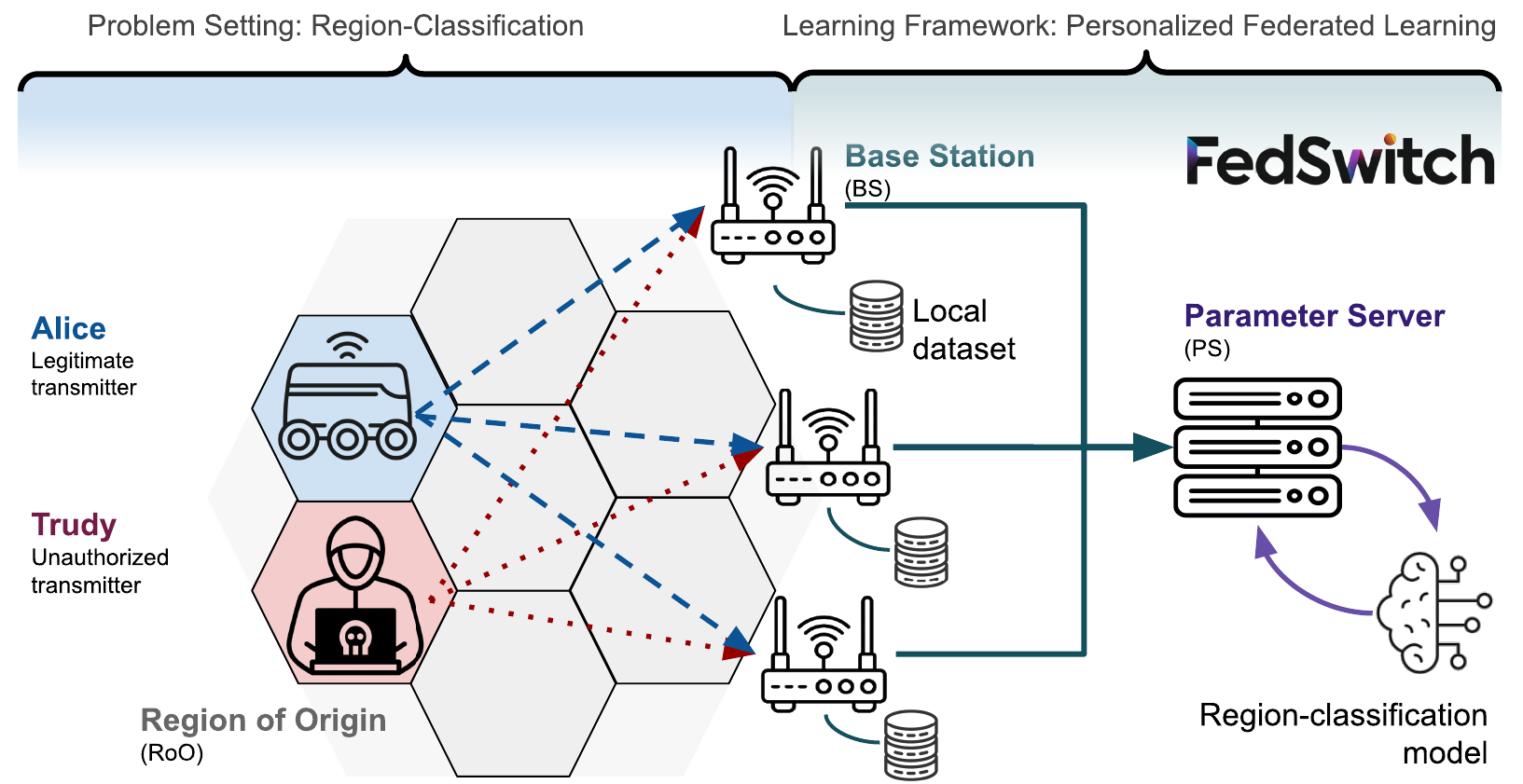}
    %
    \caption{
Graphical overview of the proposed \algo framework.
In the considered wireless \ac{iot} setting, users transmit from predefined \acp{roo} and multiple \acp{bs} observe location-dependent channel measurements.
The \acp{bs} use these measurements to infer whether a received signal is spatially consistent with the claimed or authorized region, enabling lightweight physical-layer verification against protocol-compliant impersonation attempts.
Rather than centralizing raw channel measurements, the \acp{bs} collaboratively train a region-classification model through a \ac{ps}.
The proposed \algo algorithm first exploits federated training to learn shared channel representations and then switches to personalized local fine-tuning when further global updates no longer improve the local loss.
}
\label{fig:high level}
\end{figure*}

\section{Problem Setting:  \\ 
Channel-based Region Classification}
\label{sec:syst_model}

We consider a wireless \ac{iot} scenario in which devices transmit uplink signals to a set of $K$ geographically distributed \acp{bs}, indexed by
$k \in \{1,\ldots,K\} \triangleq [K]$.
Each \ac{bs} observes channel-dependent measurements and uses them to infer the \ac{roo} of the received transmission.
The resulting \ac{crc} task is motivated by physical-layer security applications, where a transmitter is expected to operate from an authorized region and a mismatch between the claimed and inferred \ac{roo} may indicate spoofing or impersonation.
We denote by Alice a legitimate transmitter associated with an authorized \ac{roo}, and by Trudy a protocol-compliant unauthorized transmitter attempting to impersonate Alice from a different region.
a pa

Let $\mathcal{A}\subset\mathbb{R}^2$ denote the geographical area of interest.
We partition $\mathcal{A}$ into $N$ predefined \acp{roo},
\begin{equation}
    \mathcal{A}
    =
    \bigcup_{n\in [N]} \mathcal{R}_n,
    \qquad
    \mathcal{R}_n \cap \mathcal{R}_{n'} = \emptyset,
    \quad n\neq n',
    \label{eq:region_partition}
\end{equation}
where each region $\mathcal{R}_n$ corresponds to one possible \ac{roo} label.
Each \ac{bs} observes uplink transmissions generated by users located in one of the regions $\{\mathcal{R}_n\}$.
We assume that the transmitters are equipped with a single antenna, while the \acp{bs} are equipped with $M$ antennas, consistent with the measurement setting considered in this work. 
When a user transmits from position $\bm{p}\in\mathcal{R}_n$, \ac{bs} $k$ observes a noisy channel response that depends on the transmitter position, the \ac{bs} position, and the surrounding propagation environment.
For subcarrier $s\in [S]$, we denote the corresponding complex channel coefficient by $\bm{h}_{k,s}(\bm{p}) \in \mathbb{C}^M$.
The multi-subcarrier channel observation collected at \ac{bs} $k$ is the column vector
\begin{equation}
    \bm{x}_k(\bm{p})
    =
    \big[
        \bm{h}^T_{k,1}(\bm{p}),
        \bm{h}^T_{k,2}(\bm{p}),
        \ldots,
        \bm{h}^T_{k,S}(\bm{p})
    \big]^\mathsf{T}
    \in \mathbb{C}^{MS}.
    \label{eq:channel_observation}
\end{equation}
The detailed channel-generation model and the construction of the channel observations used in the experiments are described later in Sec.~\ref{sec:setup}.
In the present section, we only require that $\bm{x}_k(\bm{p})$ contains location-dependent information induced by the propagation environment. In this paper, we are particularly interested in the scarce-data regime, a typical scenario in IoT networks where constrained devices cannot perform channel estimation frequently.

The \ac{crc} problem at \ac{bs} $k$ consists of assigning the received channel observation $\bm{x}_k(\bm{p})$ to one of the $N$ predefined regions.
That is, \ac{bs} $k$ implements a decision rule
\begin{equation}
    \delta_k : \mathbb{C}^{MS} \rightarrow [N],
    \label{eq:region_classifier_decision}
\end{equation}
where $\delta_k(\bm{x}_k)=n$ means that the signal observed at \ac{bs} $k$ is classified as originating from region $\mathcal{R}_n$.
\subsection{Performance Evaluation} \label{sub: perf_eval}

%
Let $P_{\rm MD}(\delta_k | \Hcal_n)$/$P_{\rm FA}(\delta_k | \Hcal_n)$ as the probability of \ac{md}/\ac{fa} under the hypothesis $\Hcal_n$, which corresponds to the region $\mathcal{R}_n$. The performance of the decision rule in \eqref{eq:region_classifier_decision} 
 is evaluated through the \ac{det} defined as
\ea{
\mathrm{DET}(\delta_k,\Hcal_n )=\lb P_{\rm MD}(\delta_k | \Hcal_n),P_{\rm FA}(\delta_k | \Hcal_n)  \rb \,, 
\label{eq:ROC}
}
which is a function both of the decision rule $\delta_k$ and of the specific \ac{roo}.
We define the average over both \acp{bs} and true \ac{roo} \ac{md}/\ac{fa} probabilities as
\eas{
\bar{P}_{\rm MD} & =\frac{1}{KN} \sum_{n,k \in [N],[K]}P_{\rm MD}(\delta_k | \Hcal_n) \\
\bar{P}_{\rm FA} & =\frac{1}{KN} \sum_{n,k \in [N],[K]} P_{\rm FA}(\delta_k | \Hcal_n)\,.
}{\label{eq:avg md-fa}}
The average \ac{det} across all hypotheses and \acp{bs} is then the tuple
\ea{
\overline{\mathrm{DET}}=\lb \bar{P}_{\rm MD},   \bar{P}_{\rm FA}\rb  
\label{eq:avg DET}
}
The DET curve in \eqref{eq:avg DET} can be used as a lightweight physical-layer verification mechanism.
For example, if Alice is authorized to transmit from region $\mathcal{R}_n$, then a classification outcome different from $n$ provides evidence that the received signal may be spatially inconsistent with Alice's authorized region.
Equivalently, if a received transmission claims to originate from $\mathcal{R}_n$, the \ac{bs} can use the output of $\delta_k$ to accept or reject this claim based on the inferred \ac{roo}. With \eqref{eq:avg DET} we are capturing the average across \acp{bs} and \acp{roo} of such quantity.

\section{Learning Framework: \\ Personalized Federated Learning }
\label{sec:Problem Definition}

Based on the \ac{crc} problem introduced in Sec.~\ref{sec:syst_model}, we now formulate the learning problem solved by the \acp{bs}.
The objective is to train \ac{nn} classifiers that infer the \ac{roo} of a received signal while keeping the raw channel measurements local at each \ac{bs}.
This naturally leads to a \ac{fl} formulation, in which the \acp{bs} collaborate through a \ac{ps} without sharing their recorded channel data.

\subsection{Local Datasets and Classification Loss}

Each \ac{bs} $k\in[K]$, collects a local labeled dataset
\begin{equation}
    \mathcal{D}_k
    =
    \left\{
    \left( \bm{x}_k^{(i)}, y_k^{(i)}\right)
    \right\}_{i \in [C_k]},
    \label{eq:local_dataset}
\end{equation}
where $ \bm{x}_k^{(i)}$ is a realization of $\bm{x}_k(\bm{p})$, the multi-subcarrier channel observation collected at \ac{bs} $k$, as defined in \eqref{eq:channel_observation}, and $y_k^{(i)} \in [N]$ is the corresponding \ac{roo}.
The number of samples available at \ac{bs} $k$ is denoted by $C_k$, and the total number of samples across all \acp{bs} is
$C = \sum_{k \in [K]} C_k$.
Each \ac{bs} trains a classifier
$
    f_k(\cdot;\bm{w}) : \mathbb{C}^{MS} \rightarrow \Delta^{N}
$, where $\bm{w}\in\mathbb{R}^{m}$ denotes the model parameters and $\Delta^{N}$ is the probability simplex over the $N$ region labels.
For an input channel observation $\bm{x}_k^{(i)}$, the classifier output
\begin{equation}
    f_k(\bm{x}_k^{(i)};\bm{w})
    =
    \left[
    q_1(\bm{x}_k^{(i)};\bm{w}),
    \ldots,
    q_N(\bm{x}_k^{(i)};\bm{w})
    \right]
\end{equation}
represents the estimated posterior probabilities of the $N$ \acp{roo}.  
The predicted region is obtained as
\begin{equation}
    \widehat{y}_k
    =
    \arg\max_{n\in[N]}
    q_n(\bm{x}_k^{(i)};\bm{w}).
\end{equation}

The local empirical risk at \ac{bs} $k$ is defined as
\begin{equation}
    F_k(\bm{w})
    =
    \frac{1}{C_k}
    \sum_{i \in [C_k]}
    \ell
    \left(
        f(\bm{x}_k^{(i)};\bm{w}),
        y_k^{(i)}
    \right),
    \label{eq:local_empirical_risk}
\end{equation}
where $\ell(\cdot,\cdot)$ is the multi-class cross-entropy loss.
If all channel measurements could be centralized, the network would train a model by minimizing the global empirical risk
\begin{equation}
    F(\bm{w})
    =
    \sum_{k \in [K]}
    \frac{C_k}{C}
    F_k(\bm{w}).
    \label{eq:global_empirical_risk}
\end{equation}
However, centralizing the datasets $\{\mathcal{D}_k\}_{k \in [K]}$ is undesirable because the channel observations are location-dependent radio fingerprints.
We therefore minimize \eqref{eq:global_empirical_risk} in a federated manner.

\subsection{Federated Learning Preliminaries}
\label{sec:preliminary}

Consider the FL setting with $K$ clients, each client $k$ possessing a local dataset
${\mathcal{D}}_k$, wishing to minimize the \emph{loss function} $\mathcal{L}$ as evaluated across all the clients and over the model weights $\bm{w}$. 
This minimization is coordinated by the PS as follows: at communication round $e \in [T_{\rm E}]$, where $T_{\rm E}$ is the total number of communication rounds, the \ac{ps} broadcasts the current global model $\bm{w}^{(e)}$ to a subset of participating \acp{bs}.
Each selected \ac{bs} initializes its local model with $\bm{w}^{(e)}$ and performs local training on $\mathcal{D}_k$.
Denoting by $\bm{w}_k^{(e+1)}$ the locally updated model at \ac{bs} $k$, the \ac{ps} aggregates the received models as
\begin{equation}
    \bm{w}^{(e+1)}
    =
    \sum_{k\in\mathcal{S}^{(e)}}
    \frac{C_k}{\sum_{j\in\mathcal{S}^{(e)}} C_j}
    \bm{w}_k^{(e+1)},
    \label{eq:fedavg_update}
\end{equation}
where $\mathcal{S}^{(e)}\subseteq[K]$ is the set of \acp{bs} participating in round $e$.
This corresponds to the standard \ac{fedavg} update.

While \ac{fedavg} is effective when local datasets are approximately identically distributed, this assumption is not satisfied in the considered region-classification problem.
Indeed, each \ac{bs} observes the same geographical area, but from different positions and under different propagation conditions. As a result, the local risks $F_k(\bm{w})$ in \eqref{eq:local_empirical_risk} may have different minimizers, and the single global model provided in \eqref{eq:global_empirical_risk} may not provide the best classifier for every \ac{bs}.

In practical FL scenarios, the communication between clients and the PS is often limited.
%
One common constraint considered in the literature on communication between clients and the PS is the \emph{rate constraint}-- in which the communication is restricted to $m \mathrm{R}_{k}$ bits from each client to the PS.
%
To meet this constraint, 
each client $k$ compresses the $m$-dimensional gradient vector $\bm{g}_{k}^{(e)}$ to a $m \mathsf{R}_k$ binary vector through a compressor $\mathrm{comp}_{\mathrm{R}_k}: \mathbb{R}^m \rightarrow [2^{m \mathsf{R}_k}]$,
where the sum over all $\mathrm{R}_k$ is constrained,
and the client is dropped (unselected) if assigned a rate of $\mathrm{R}_k=0$.
The PS aggregates all the compressed gradients and forms the new global weights
\begin{equation}
    \bm{w}^{(e+1)} = \bm{w}^{(e)} - \eta_{e} \hat{\bm{g}}^{(e)},
    \text{ for $e\in [T_{\rm E}]$, where}
\label{eq:glob_update}
\end{equation}
\begin{equation}
    \hat{\bm{g}}^{(e)} = \frac{1}{K} \sum_{k \in [K]} \mathrm{comp}^{-1}_{\mathrm{R}_k} (\mathrm{comp}_{\mathrm{R}_k}(\bm{g}_{k}^{(e)})).
\label{eq:aggregation}
\end{equation}
The ultimate goal in a single FL iteration is to minimize the distortion in gradient reconstruction, i.e., the distortion between $\bm{g}^{(e)}$ and $\hat{\bm{g}}^{(e)}$.
We assume that minimizing the gradient reconstruction distortion leads to a minimum perturbation of the accuracy.



%
\subsection{Motivation for Personalized Federated Learning}

The heterogeneity of the local channel distributions makes indefinite collaborative training suboptimal.
At early training stages, model updates from other \acp{bs} are beneficial, especially when local datasets are small.
As training progresses, however, the global model may become less representative of the propagation conditions observed at a given \ac{bs}, degrading its local classification performance.
This motivates a personalized \ac{fl} strategy that combines initial collaboration with subsequent local adaptation.

Each \ac{bs} should initially benefit from shared training across the network, but should also be allowed to specialize its classifier once the global updates stop improving its local objective.
The goal is therefore to learn a collection of personalized models $
    \{\bm{w}^\star_k\}_{k\in [K]}=\{\bm{w}_1^\star,\ldots,\bm{w}_K^\star\}
$,
where $\bm{w}_k^\star$ is adapted to the channel statistics and region distribution observed at \ac{bs} $k$.
This leads to the personalized objective of minimizing the global risk \eqref{eq:global_empirical_risk}
\begin{equation}
    \min_{\bm{w}_1,\ldots,\bm{w}_K}
    \sum_{k \in [K]}
    \frac{C_k}{C}
    F_k(\bm{w}_k),
    \label{eq:personalized_objective}
\end{equation}
with the additional requirement that the models should still exploit the information shared during the federated training phase.

\begin{table}[t]
\centering
\caption{Main Notation for Personalized \ac{fl}.}
\label{tab:fl_notation}
\begin{tabular}{lp{6.1cm}}
\toprule
\textbf{Symbol / Term} & \textbf{Description} \\
\midrule

$b_k$ & \textbf{Base Station:} $k$-th \ac{bs} \\

$\mathcal{R}_n$ & \textbf{Region of Origin:} $n$-th \ac{roo} \\


$\bm{x}_k^{(i)}$ & \textbf{Sample:} $i$-th channel observation at $b_k$ \\

$y_k^{(i)}$ & \textbf{Label:} \ac{roo} label associated with $\bm{x}_k^{(i)}$ \\

$\delta_k(\cdot)$ & \textbf{Local Decision Rule:} classifier used at $b_k$ \\

$\mathcal{D}_k$ & \textbf{Local Dataset:} Local dataset at $b_k$ \\

$e$ & \textbf{Communication round:} $e$-th comm. round out of $T_{\rm E}$ \\

$C_k$ & \textbf{Local Dataset Size:} Number of samples in $\mathcal{D}_k$ \\

$f(\cdot;\bm{w})$ & \textbf{Region Classifier:} Learned classifier with param. $\bm{w}$ \\

$w_k$ & \textbf{Local Model:} Classifier parameters trained at $b_k$ \\

$\bm{w}=\{w_k\}$ & \textbf{Distributed Model:} Collection of local models \acp{bs} \\

$F_k(\bm{w})$ & \textbf{Local Objective:} Empirical loss at $b_k$ \\

$F(\bm{w})$ & \textbf{Global Objective:} Global federated objective \\

$w_{\mathrm{F}}^{(e)}$ & \textbf{Global FL Model:} Global model at round $e$ \\

$\bm{w}_k^{(e,\ell)}$ & \textbf{Local Model:} Model at $b_k$ at round $e$ at local step $\ell$ \\


\ac{ps} & \textbf{Parameter Server:} Entity combining \acp{bs}' models \\

Alice & \textbf{Legitimate TX:} Device transmitting from its \ac{roo}. \\

Trudy & \textbf{Adversary:} Attacker   \\


\bottomrule
\end{tabular}
\end{table}





\section{Proposed Solution: \algo}
\label{sec:fedloss}

\algo combines collaborative federated training with local specialization.
In the first phase, the \acp{bs} collaboratively train a shared model using
\ac{fedavg}, exploiting channel measurements collected across the deployment.
In the second phase, each \ac{bs} progressively adapts the shared model to its
local propagation conditions and region coverage.

The transition between the two phases is determined adaptively from the local
training loss.
Each \ac{bs} evaluates the current global model on its local dataset and is
marked as ready for personalization when the loss improvement over a prescribed
observation window falls below a tolerance.
Once all \acp{bs} are marked as ready, the final global model initializes the
personalization phase.

Let $\bm{w}_{\rm F}^{(e)}$ denote the global model at communication round $e$, and let $\bm{w}_k^{(e)}$ denote the local model at \ac{bs} $k$.
We denote by $F_k(\bm{w})$ the empirical loss of model $\bm{w}$ on the local dataset $\mathcal{D}_k$.
For each \ac{bs}, we introduce a binary state variable $
    \rho_k^{(e)} \in \{0,1\}\,,
$ where $\rho_k^{(e)}=0$ means that \ac{bs} $k$ is still in the collaborative phase, while $\rho_k^{(e)}=1$ means that it is ready for local fine-tuning. Note that even if a \ac{bs} is ready for finetuning, it keeps participating in the \ac{fedavg} phase until all the others are ready: even if this can be a waste of resources for such \ac{bs}, it is not for the others, thus the whole system benefits from it. 
Given a patience window $T_{\rm W}$ and a tolerance $\epsilon>0$, the switching condition for the $k$-th \ac{bs} is triggered at round $e_k$ when, for the first time, the windowed average loss improvement follows below the threshold $\epsilon$, that is
\begin{equation} \label{eq:loss_stagnation_rule}
{\Delta F}_k^{(e)}
= \frac{1}{T_{\rm W}}\sum_{e=e_k-T_{\rm W}}^{e_k-1}
F_k(\bm{w}_k^{(e-1)})
- F_k(\bm{w}_k^{(e)})
<
\epsilon.
\end{equation}
When \eqref{eq:loss_stagnation_rule} holds, the loss improvement obtained from the global model over the last $T_{\rm W}$ communication rounds is considered negligible, and \ac{bs} $k$ is marked as ready for personalization. \remembertext{rev2.2.1}{ 
Note that the switching condition in \eqref{eq:loss_stagnation_rule} can be negative: it means that not only is training useless but is also damaging the \ac{bs} $k$, thus all the more so must switch.
}

\subsection{Collaborative \ac{fedavg} Phase}
At each communication round $e$, the \ac{ps} broadcasts the current global model $\bm{w}_{\rm F}^{(e)}$ to the participating \acp{bs}.
Each \ac{bs} initializes its local model as
$    \bm{w}_k^{(e,0)} = \bm{w}_{\rm F}^{(e)}
$ and performs $E$ local \ac{msgd} steps on $\mathcal{D}_k$.
For local step $\ell=0,\ldots,E-1$, the update is
\begin{equation}
    \bm{w}_k^{(e,\ell+1)}
    =
    \bm{w}_k^{(e,\ell)}
    -
    \eta^{(e,\ell)} \bm{m}_k^{(e,\ell)},
    \label{eq:local_update}
\end{equation}
where $\eta^{(e,\ell)}$ is the \ac{lr} and $\bm{m}_k^{(e,\ell)}$ is the momentum-based update direction.
Specifically,
\begin{equation}
    \bm{m}_k^{(e,\ell)}
    =
    \beta \bm{m}_k^{(e,\ell-1)}
    +
    (1-\beta) \bm{g}_k^{(e,\ell)},
    \label{eq:momentum_update}
\end{equation}
where $\beta\in[0,1)$ is the momentum coefficient and
\begin{equation}
    \bm{g}_k^{(e,\ell)}
    =
    \frac{1}{B_{\rm S}}
    \sum_{i\in\mathcal{B}_k^{(e,\ell)}}
    \nabla_w
    \ell
    \left(
        f(\bm{x}_k^{(i)};\bm{w}_k^{(e,\ell)}),
        y_k^{(i)}
    \right)
    \label{eq:mini_batch_gradient}
\end{equation}
is the stochastic gradient computed on a mini-batch $\mathcal{B}_k^{(e,\ell)}\subset\mathcal{D}_k$ of size $B_{\rm S}$.

After the $E$ local steps, \ac{bs} $k$ obtains $
    \bm{w}_k^{(e+1)} = \bm{w}_k^{(e,E)}$.
The \ac{ps} aggregates the local models according to the \ac{fedavg} rule
\begin{equation}
    \bm{w}_{\rm F}^{(e+1)}
    =
    \sum_{k \in [K]}
    \frac{C_k}{C}
    \bm{w}_k^{(e+1)},
    \qquad
    C=\sum_{k \in [K]} C_k.
    \label{eq:fedavg_fedloss}
\end{equation}
The updated global model is then evaluated locally by each \ac{bs} through its own loss $F_k(\bm{w}_{\rm F}^{(e+1)})$, and the switching rule in \eqref{eq:loss_stagnation_rule} is applied.

\subsection{Personalized Fine-Tuning Phase}
Let $T_{\rm s}$ denote the first communication round at which all \acp{bs} are marked as ready for personalization, i.e.,
\begin{equation}
    T_{\rm s}
    =
    \min
    \left\{
        e :
        \rho_k^{(e)}=1,\ \forall k\in[K]
    \right\}.
\end{equation}
At this point, the collaborative phase transitions into the personalization phase. \remembertext{rev2.3}{ 
Waiting for all the \acp{bs} to complete the information sharing is influenced by the slowest \ac{bs}, and it is actually a design choice. In fact, unlike typical \ac{fl} applications, our scenario has few clients, and all of them must perform well because each \acp{bs} makes its decision autonomously. 
} Each \ac{bs} initializes its personalized model from the final global model obtained after the collaborative stage, $
\widetilde{\bm{w}}_k^{(0)}
=
\bm{w}_{\rm F}^{(T_{\rm s})}$. 

After the switch, the \acp{bs} continue updating their models using both the local and global information. In particular, at each iteration, the model update is obtained as a weighted combination of the locally adapted model and the global model, namely
\begin{equation}
\widetilde{\bm{w}}_k^{(t)}
=
\alpha_t \bm{w}_k^{(t)}
+
(1-\alpha_t) \bm{w}_{\rm F}^{(t)},
\end{equation}
where $\alpha_t \in [0,1]$ is an exponentially increasing parameter controlling the tradeoff between personalization and global cooperation: as time passes, the contribution of the global model progressively decreases, while the local model becomes more influential, thus the models quickly but smoothly switch to pure fine-tuning. Note that also during the finetuning phase, each \acp{bs} keeps forwarding its models so that the \ac{ps} can compute the average model $\bm{w}_{\rm F}^{(t)}$. In practice, each \ac{bs} performs stochastic-gradient updates on its local dataset $\mathcal{D}_k$, while still incorporating information from the global model through the weighted combination above. This strategy preserves the representations learned during the collaborative phase while gradually adapting the classifier to the channel statistics observed by \ac{bs} $k$. The resulting procedure is summarized in Algorithm~\ref{alg:FedLoss}. The algorithm returns a set of personalized classifiers $\{\bm{w}_k^\star\}_{k \in [K]}$, one for each \ac{bs}, that are then used locally to infer the \ac{roo} of newly received transmissions.

\begin{algorithm}
\caption{
The proposed solution: \algo. 
}
\label{alg:FedLoss}
\begin{algorithmic}[1]
\Require Local datasets $\{\mathcal{D}_k\}_{k\in [K]}$, initial global model $\bm{w}_{\rm F}^{(0)}$, local steps $E$, momentum coefficient $\beta$, patience window $T_{\rm W}$, tolerance $\epsilon$, total number of communication rounds $T_{\rm E}$
\Ensure Personalized models $\{\bm{w}_k^\star\}_{k\in [K]}$

\State Initialize state variables $\rho_k^{(0)} \gets 0$ for all $k \in \{1, \dots, K\}$
\State Set communication round $e \gets 0$

\Repeat \hfill \Comment{\textbf{Phase 1: Collaborative FedAvg Phase}}
    \For{each \ac{bs} $k$ \textbf{in parallel}}
        \State Initialize local model: $\bm{w}_k^{(e,0)} \gets \bm{w}_{\rm F}^{(e)}$
        \For{local step $\ell = 0$ to $E-1$}
            \State Compute stochastic gradient $\bm{g}_k^{(e,\ell)}$ via \eqref{eq:mini_batch_gradient}
            \State Update momentum direction $\bm{m}_k^{(e,\ell)}$ via \eqref{eq:momentum_update}
            \State Update local parameters $\bm{w}_k^{(e,\ell+1)}$ via \eqref{eq:local_update}
        \EndFor
        \State Set local round output: $\bm{w}_k^{(e+1)} \gets \bm{w}_k^{(e,E)}$
    \EndFor
    
    \State \ac{ps} updates global model $\bm{w}_{\rm F}^{(e+1)}$ via \eqref{eq:fedavg_fedloss}
    \State $e \gets e + 1$
    
    \For{each \ac{bs} $k$ in parallel}
        \If{$e \ge T_{\rm W}$ \textbf{and} switching condition in \eqref{eq:loss_stagnation_rule} occurs}
            \State Mark as ready for personalization: $\rho_k^{(e)} \gets 1$
        \EndIf
    \EndFor
\Until{$\rho_k^{(e)} = 1$ for all $k \in \{1, \dots, K\}$}

\State Set switching round $T_{\rm s} \gets e$
\State Compute aggregation weights schedule 
$\{\alpha_t\}_{t=1}^{T_{\rm E}-{T_{\rm s}}}$
\Statex

\For{each \ac{bs} $k$ \textbf{in parallel}}
    \State Initialize personalized tracker: $\widetilde{\bm{w}}_k^{(0)} \gets \bm{w}_{\rm F}^{(T_{\rm s})}$
\EndFor
\Statex

\For{iteration $t = 1$ to $T_{\rm E}-T_{\rm s}$} \hfill \Comment{\textbf{Phase 2: Personalized Fine-Tuning Phase}}
    \For{each \ac{bs} $k$ \textbf{in parallel}}
        \State Perform \ac{msgd} updates on $\mathcal{D}_k$ starting from $\widetilde{w}_k^{(t-1)}$ to obtain $\bm{w}_k^{(t)}$
    \EndFor
    
    \State \ac{ps} updates global model $\bm{w}_{\rm F}^{(t)}$ by aggregating local updates $\bm{w}_k^{(t)}$ \hfill 
    
    \For{each \ac{bs} $k$ \textbf{in parallel}}
        \State Apply weighted combination: $\widetilde{\bm{w}}_k^{(t)} \gets \alpha_t \bm{w}_k^{(t)} + (1-\alpha_t) \bm{w}_{\rm F}^{(t)}$
    \EndFor
\EndFor

\For{each \ac{bs} $k$ \textbf{in parallel}}
    \State Save final personalized classifier: $\bm{w}_k^\star \gets \widetilde{w}_k^{(T_{\rm E}-T_{\rm s})}$
\EndFor

\State \Return $\{\bm{w}_k^\star\}_{k\in [K]}$
\end{algorithmic}
\end{algorithm}

\section{Performance Analysis}
\label{sec:Performance Analysis}

In this section, we provide some theoretical results on the performance of \algo.
In particular, Th. \ref{th:switching} provides motivation to the switching rule in \eqref{eq:loss_stagnation_rule} while Th. \ref{th: theorem_bound} bounds the hypothesis testing performance of \algo. 

\smallskip
\noindent
\textbf{{Assumptions:}} \label{sec:Assumptions}
Before proceeding, let us discuss the assumptions that we adopt in the performance analysis in this section. We adopt standard assumptions on the loss function \cite{li2019convergence}, as follows.
%
\begin{enumerate}[leftmargin=*]
    \item \emph{Smoothness:}
The loss function $\mathcal{L}$ is $L$-smooth. For all $\bm{v}, \bm{w} \in \mathbb{R}^m$,
\begin{equation}
\mathcal{L}(\cdot; \bm{v}) \le \mathcal{L}(\cdot;\bm{w}) + \nabla \mathcal{L}(\cdot;\bm{w})^\top (\bm{v} -  \bm{w}) + \frac{L}{2} \|\bm{v} - \bm{w}\|^2\,,
\end{equation}
\item \emph{Strong Convexity:}
The loss function $\mathcal{L}$ is $\mu$-strongly convex. For all $\bm{v}, \bm{w} \in \mathbb{R}^m$,
\begin{equation}
\mathcal{L}(\cdot; \bm{v}) \ge \mathcal{L}(\cdot;\bm{w}) + \nabla \mathcal{L}(\cdot;\bm{w})^\top (\bm{v} -  \bm{w}) + \frac{\mu}{2} \|\bm{v} - \bm{w}\|^2.
\end{equation}

    \item \emph{Bounded Variance:}
Let $\bm{x}_k^{(i)}$ be sampled uniformly at random from the local dataset $\mathcal{D}_k$. The variance of stochastic gradients is bounded. In formulae, we have that $\forall k = 1, \dots, K$
\begin{equation}
\mathbb{E} \left\| \nabla \mathcal{L}(\bm{x}_k^{(i)};\bm{w}_k^t) - \mathbb{E}( \mathcal{L}(\bm{x}_k^{(i)};\bm{w}_k^t)) \right\|^2 \le \sigma_k^2\,,
\end{equation}
where both expectations are on the distribution $\mathcal{D}_k$.
\item \emph{Bounded Second Moment:}
The expected squared norm of stochastic gradients is uniformly bounded $\forall k = 1, \dots, K$, $t = 1, \dots, T$
\begin{equation}
\mathbb{E} \left\| \nabla \mathcal{L}(\bm{x}_k^{(i)};\bm{w}_k^t) \right\|^2 \le G^2.
\end{equation}
\end{enumerate}


\smallskip
\noindent
\textbf{{On  the Role of the Switching Criterion:}} A central element of \algo is the \emph{switching criterion} between the collaborative and fine-tuning phases. 
Since prematurely switching to fine-tuning may lead to suboptimal personalization and excessive switching may waste communication rounds, we monitor the training loss of the local models. Specifically, at each \ac{bs} $k$, we track the average loss reduction over a sliding window of $T_{\rm W}$ communication rounds: if the improvement falls below a predefined threshold $\epsilon$, the \ac{bs} is marked as ready to local fine-tuning, and once all \acp{bs} are ready, the fine-tuning phase begins. 
Formally, the $k$-th \ac{bs} is marked as ready at time $e_k$ when, for the first time \eqref{eq:loss_stagnation_rule} occurs.

A first question is \emph{when} the switch between Phase 1 and Phase 2 in Algo. \ref{alg:FedLoss}, i.e., the \ac{fedavg} and fine-tuning phases, occurs. 
The next theorem, Th. \ref{th:switching}, shows that the windowed average loss improvement in \eqref{eq:loss_stagnation_rule} decays as $\Ocal(1/e)$.

Let $\bm{w}_{\rm F}^{\star}$ be the optimal federated model, and let $\bm{w}_{\rm F}^{(e)}$ be the average model at each client at communication round $e$, that is 
%
%
\begin{equation} \label{eq:avg_mod}
    \bm{w}_{\rm F}^{(e)} = \frac{1}{K}\sum_{k\in [K]} \bm{w}_k^{(e)}.
\end{equation}
%
A proxy for the switching condition in \eqref{eq:loss_stagnation_rule} is then
\begin{equation} \label{eq: obj_fed_prime}
    \mathbb{E}(\mathcal{L}(\bm{w}_{\rm F}^{(e)},\mathcal{D}_k))- \mathbb{E}(\mathcal{L}(\bm{w}^\star_{\rm F},\mathcal{D}_k)).
\end{equation}
Our next theorem characterized the decay of \eqref{eq: obj_fed_prime} as a function of the communication round $e$.




\begin{thm}[Collaborative-phase optimality gap]
\label{th:switching}
Let Assumptions 1 to 4 hold and $L,\mu,\sigma_k,G$ be defined therein. 
Also, let $\beta$ be the momentum factor and $B_{\rm S}$ the batch size in Algo. \ref{alg:FedLoss}. 
Let us define the non-iid factor 
\ea{
\Gamma = \mathbb{E}(\mathcal{L}(\bm{w}^\star_{\rm F},\mathcal{D}_k))
- \mathbb{E}(\mathcal{L}(\{\bm{w}^\star_k\}_{k\in [K]},\mathcal{D}_k))
\ge 0\,,
}
 where $\mathbb{E}(\mathcal{L}(\{\bm{w}^\star_k\}_{k\in [K]},\mathcal{D}_k))$ is the minimal average loss function attainable by the \acp{bs}; which happens when each \ac{bs} $k$ uses its optimal model $\bm{w}^\star_k$ \footnote{Note that $\Gamma$ measures how much the distributions $\mathcal{D}_k$ are different: as the number of samples grows, then $\Gamma$ goes to zero in case data were i.i.d.}. 
Choose
\ea{
\kappa = \frac{L}{\mu}, \quad 
\gamma = \max\{8\kappa,E\}, \quad 
\eta^{(t)} = \frac{2}{\mu(\gamma+t)\beta}
\label{eq:constant thm 1}
}
for \algo in Algo. \ref{alg:FedLoss}.
Then, at each communication time $e$, we have
\begin{equation} \label{eq:final_theorem}
\begin{split}
    & \mathbb{E}(\mathcal{L}(\bm{w}_{\rm F}^{(e)},\mathcal{D}_k))- \mathbb{E}(\mathcal{L}(\bm{w}^\star_{\rm F},\mathcal{D}_k))\\
& \le
\frac{\kappa}{\gamma+e-1}
\left(
\frac{2B}{\mu}
+ 
\frac{\mu\gamma}{2}
\mathbb{E}(\|\bm{w}_{\rm F}^{(0)}- \bm{w}_{\rm F}^\star\|^2)
\right),
\end{split}
\end{equation}
where
\begin{equation} \label{eq: def_B}
B
=
\frac{1}{K^2}\sum_{k \in [K]} \lb \frac{\sigma_k}{B_{\rm S}} \rb^2
+
6L\Gamma
+
(8(E-1)+(1-\beta^2))G^2.    
\end{equation}

\end{thm}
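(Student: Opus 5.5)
The bound has the shape of the FedAvg convergence result of \cite{li2019convergence}, so I will adapt that argument to momentum updates. I work with a virtual averaged sequence $\bar{\bm{w}}^{(t)} = \frac{1}{K}\sum_{k}\bm{w}_k^{(t)}$ defined at every local step $t$, not only at synchronization rounds; it coincides with $\bm{w}_{\rm F}^{(e)}$ in \eqref{eq:avg_mod} whenever $t$ is a multiple of $E$. I also set
\[
\bar{\bm{m}}^{(t)}=\frac{1}{K}\sum_k \bm{m}_k^{(t)},\qquad
\bar{\bm{g}}^{(t)}=\frac{1}{K}\sum_k \nabla F_k(\bm{w}_k^{(t)}).
\]
The virtual sequence then obeys $\bar{\bm{w}}^{(t+1)}=\bar{\bm{w}}^{(t)}-\eta^{(t)}\bar{\bm{m}}^{(t)}$. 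The goal is a one-step recursion
\[
\mathbb{E}\|\bar{\bm{w}}^{(t+1)}-\bm{w}_{\rm F}^\star\|^2\le (1-\mu\eta^{(t)}\beta)\,\mathbb{E}\|\bar{\bm{w}}^{(t)}-\bm{w}_{\rm F}^\star\|^2+(\eta^{(t)})^2 B .
\]
This is the momentum analogue of Lemma~1 of \cite{li2019convergence}. The factor $\beta$ in the stepsize $\eta^{(t)}=2/(\mu(\gamma+t)\beta)$ in \eqref{eq:constant thm 1} is exactly what makes the contraction coefficient equal $1-2/(\gamma+t)$.

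\textbf{One-step recursion.} I will carry it out in three parts.
\begin{enumerate}
\item Expand $\|\bar{\bm{w}}^{(t)}-\eta\bar{\bm{m}}^{(t)}-\bm{w}_{\rm F}^\star\|^2$ and split the direction as $\bar{\bm{m}}^{(t)}=(1-\beta)\bar{\bm{g}}^{(t)}+\beta\bar{\bm{m}}^{(t-1)}+(\text{zero-mean noise})$. The noise term contributes $\frac{1}{K^2}\sum_k(\sigma_k/B_{\rm S})^2$ by Assumption~3 and independence of mini-batches across \acp{bs}.
\item Treat the cross term $-2\eta\langle\bar{\bm{w}}-\bm{w}_{\rm F}^\star,\bar{\bm{g}}\rangle$ exactly as in \cite{li2019convergence}. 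Use $L$-smoothness and $\mu$-strong convexity (Assumptions~1--2) on each $F_k$ and the inequality $\|\nabla F_k(\bm{w})\|^2\le 2L(F_k(\bm{w})-F_k^\star)$. The residual loss gap $F(\bm{w}_{\rm F}^\star)-\sum_k F_k^\star$ then appears, and it is precisely $\Gamma$, giving the $6L\Gamma$ term.
\item The client-drift term $\frac{1}{K}\sum_k\mathbb{E}\|\bar{\bm{w}}^{(t)}-\bm{w}_k^{(t)}\|^2$ is bounded by $4\eta^2(E-1)^2G^2$, since at most $E-1$ local steps separate $t$ from the last synchronization and $\eta^{(t)}\le 2\eta^{(t+E)}$ for $\gamma\ge E$. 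After the usual absorption this yields the $8(E-1)G^2$ term.
\end{enumerate}

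\textbf{Main obstacle: the momentum residual.} I expect the contribution of $\beta\bar{\bm{m}}^{(t-1)}$ to be the main obstacle. It is not a descent direction at the current point, so the cross term it generates is not controlled by strong convexity. My first attempt is to bound it crudely with Young's inequality and Assumption~4, using $\|\bm{m}_k\|^2\le G^2$: the momentum is a convex combination of stochastic gradients, so this holds by Jensen. This should produce an extra $(1-\beta^2)G^2$-type term. It also leaves a weight of $(1-\beta)$ rather than $1$ on the strongly convex descent term; I will accept that weight, and it is what the $\beta$ rescaling of the stepsize is meant to compensate. The cost is that the effective contraction constant must be arranged so that $\mu\eta^{(t)}\beta=2/(\gamma+t)$. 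If the bookkeeping fails, I would instead run the analysis on the shifted iterate $\bar{\bm{w}}^{(t)}-\frac{\eta\beta}{1-\beta}\bar{\bm{m}}^{(t-1)}$, which follows plain SGD dynamics.

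\textbf{Closing the induction.} With the recursion in hand, I prove by induction that
\[
\Delta_t:=\mathbb{E}\|\bar{\bm{w}}^{(t)}-\bm{w}_{\rm F}^\star\|^2\le \frac{v}{\gamma+t},
\qquad
v=\max\Big\{\frac{4B}{\mu^2\beta^2},\,\gamma\Delta_0\Big\},
\]
as in \cite{li2019convergence}. The condition $\gamma\ge 8\kappa$ guarantees $\eta^{(t)}\le 1/(4L)$, which is needed in step~2. Finally, $L$-smoothness gives $\mathbb{E}\mathcal{L}(\bm{w}_{\rm F}^{(e)})-\mathbb{E}\mathcal{L}(\bm{w}_{\rm F}^\star)\le \frac{L}{2}\Delta_e$. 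Bounding the maximum by the sum reproduces the right-hand side of \eqref{eq:final_theorem}, with the constants absorbed as in the statement and the index shift to $\gamma+e-1$ as stated there.
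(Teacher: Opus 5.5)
\textbf{There is a genuine gap: the momentum cross term is not controlled.} Your skeleton matches the paper's appendix: the virtual averaged sequence, the one-step lemma of \cite{li2019convergence} for the gradient part, the drift bound, $\mathbb{E}\|\bm m_k\|^2\le G^2$ (your Jensen argument for this is correct), the $v/(\gamma+t)$ induction, and smoothness at the end. The step you flag as the main obstacle, however, fails as planned. Write $\eta=\eta^{(t)}$. For any Young parameter $c>0$ you get $2\eta\beta\,|\langle\bar{\bm w}^{(t)}-\bm w_{\rm F}^\star,\bar{\bm m}^{(t-1)}\rangle|\le\eta\beta c\,\Delta_t+\frac{\eta\beta}{c}G^2$. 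Preserving a contraction $1-\Theta(\mu\eta)$ forces $c=O(\mu)$, which leaves a \emph{first-order} remainder $\Omega(\eta G^2/\mu)$. A recursion $\Delta_{t+1}\le(1-a\eta_t)\Delta_t+b\eta_t+O(\eta_t^2)$ then only certifies a neighbourhood of size $b/a$ that does not shrink, not $O(1/(\gamma+t))$. Letting $c$ grow as $\eta\to0$, to make the remainder second order, adds $\eta\beta c\,\Delta_t\gg\mu\eta\Delta_t$ and destroys the contraction. So this route cannot give the stated rate, with or without a $(1-\beta^2)G^2$ term.

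The paper does not resolve this either. Its appendix splits $\mathbb{E}\|\bm a-\eta^{(t)}(1-\beta)\bm m^{(t-1)}\|^2$, with $\bm a=\bm w^{(t)}-\widetilde\eta^{(t)}\bm g^{(t)}-\bm w_{\rm F}^\star$, into $\mathbb{E}\|\bm a\|^2+\mathbb{E}\|\eta^{(t)}(1-\beta)\bm m^{(t-1)}\|^2$ (its terms A and C). That simply discards this cross term without justification. Its bound on C, namely $(\eta^{(t)})^2(1-\beta)^2G^2=(\beta\eta^{(t)})^2\frac{(1-\beta)^2}{\beta^2}G^2$, also does not match the $(1-\beta^2)G^2$ in $B$.

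Your fallback is the right idea and should be the main line. Set $\bar{\bm z}^{(t)}=\bar{\bm w}^{(t)}-\frac{\eta^{(t)}\beta}{1-\beta}\bar{\bm m}^{(t-1)}$. Then $\bar{\bm z}$ takes a plain SGD step of size $\eta^{(t)}$ along the averaged stochastic gradient at the $\bm w_k^{(t)}$, plus an $O\bigl((\eta^{(t)})^2\bigr)$ perturbation caused by $\eta^{(t)}\neq\eta^{(t+1)}$. Otherwise the momentum enters only through $\mathbb{E}\|\bar{\bm z}^{(t)}-\bar{\bm w}^{(t)}\|^2\le\frac{(\eta^{(t)})^2\beta^2}{(1-\beta)^2}G^2$. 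All of this is second order, but expect constants of type $1/(1-\beta)$ rather than literally the stated $B$.

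\textbf{The $\beta$-bookkeeping does not match the statement.} Under the main-text rule \eqref{eq:momentum_update} that you use, the fresh gradient carries weight $1-\beta$. Direct expansion therefore gives contraction $1-\mu\eta^{(t)}(1-\beta)$, and the shifted iterate gives effective step $\eta^{(t)}$. Neither is $1-\mu\eta^{(t)}\beta$, and a factor $\beta$ in the step size cannot ``compensate'' a weight $1-\beta$.

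The paper's appendix silently reverses the roles, $\bm m_k^{(t)}=\beta\bm g_k^{(t)}+(1-\beta)\bm m_k^{(t-1)}$. Its effective step is then $\widetilde\eta^{(t)}=\beta\eta^{(t)}=2/(\mu(\gamma+t))$, and every condition and remainder is stated for $\widetilde\eta^{(t)}$. Concretely, $\gamma\ge8\kappa$ gives $\widetilde\eta^{(t)}\le1/(4L)$, but for $\eta^{(t)}$ itself you only get $\eta^{(t)}\le1/(4L\beta)$. Your claim that $\gamma\ge8\kappa$ ensures $\eta^{(t)}\le1/(4L)$ is therefore false for $\beta<1$. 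Likewise the paper's recursion is $(1-\mu\widetilde\eta^{(t)})\Delta_t+(\widetilde\eta^{(t)})^2B$. With your remainder $(\eta^{(t)})^2B$ and $v=\max\{4B/(\mu^2\beta^2),\gamma\Delta_0\}$, smoothness gives $\frac{\kappa}{\gamma+t}\bigl(\frac{2B}{\mu\beta^2}+\frac{\mu\gamma}{2}\Delta_0\bigr)$. The extra factor $1/\beta^2\ge1$ cannot be ``absorbed'' into the stated constant.
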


\begin{IEEEproof}
The proof is provided in App. \ref{app:switching}.
\end{IEEEproof}
Let us comment briefly on the difference between the condition in \eqref{eq:loss_stagnation_rule}
and the condition in \eqref{eq: obj_fed_prime}.
The expectation in \eqref{eq: obj_fed_prime} is taken over the randomness in the stochasticity of the gradient, while the \eqref{eq:loss_stagnation_rule} is evaluated over a set of gradient realizations. 
%

 By the law of large numbers, this empirical average converges almost surely to its expectation as $T_{\rm W}$ grows. Theorem \ref{th:switching} establishes that this expectation decays as $ O(1/e)$, which implies that for any tolerance $\epsilon > 0$ the switching condition is triggered in finite time almost surely.

We also see that for more heterogeneous data, the algorithm switch happens later, as $\Gamma$ increases. A similar behavior is observed for the local optimization steps $E$, yet performing more frequent updates increases the energy cost.

\begin{rem}[Scope of the analysis]
The strongly convex setting of Theorem~\ref{th:switching} is adopted
as a tractable analytical model, rather than as a characterization of
the nonconvex neural-network objective used in Sec.~VI.
Its purpose is to make explicit the dependence of the optimality-gap
bound on data heterogeneity and the number of local updates.
Nonconvex analyses of related distributed momentum methods,
such as~\cite{yu2019momentum}, instead provide stationarity guarantees
in terms of expected squared-gradient norms.
Such results do not directly yield guarantees for the finite-window
switching criterion in~\eqref{eq:loss_stagnation_rule}, which would
require a separate analysis.
We therefore retain the strongly convex setting for analytical
interpretability and assess the switching behavior of the implemented
neural classifier numerically.
\end{rem}

\smallskip
\noindent
\textbf{{On the Need for Personalized Fine-Tuning:}}
\label{sec:personalization_needed}
We now investigate whether the second step of \algo, i.e., fine-tuning, is truly needed. 
To this end, let us focus on the case of $K=2$ and $N=2$ for simplicity, that is, the binary \ac{crc} with two \ac{bs} and two \ac{roo}. Without loss of generality, let us assume Alice is located in \ac{roo} $n=1$, while Trudy is in $n=2$. For this setting, we show a bound in the \ac{det} performance between \algo vs. \ac{fedavg}. 

For the binary \ac{crc} model, the optimal detection model is the \ac{np} test, which from \cite{8798661} is attained by both \algo and \ac{fedavg}. Yet, \ac{fedavg} sees a mixture of data from the two \acp{bs}, thus the \ac{np} test it reaches is suboptimal with respect to the two optimal \ac{np} tests the two \acp{bs} could achieve with enough training data.  Consequently, we can assess the suboptimality of \ac{fedavg} through the \ac{np} test it reaches with respect to the optimal one.
The optimal policy is characterized by a single threshold on the \ac{llr} of the distribution observed at a \ac{bs}.
%
The next theorem shows a bound between the optimal \ac{llr} test as attained by the \ac{np} test versus that attained by \ac{fedavg} as a function of the Total Variation (TV) among the hypotheses. 

\begin{thm}[Binary \ac{crc}: \ac{det} bound of NP vs. \ac{fedavg}]
\label{th: theorem_bound}
Consider the binary \ac{crc} problem and let $p_k^{n}$
be the distribution the \ac{bs} $k$ experiences when the transmitter is in \ac{roo} $\mathcal{R}_n$. Without loss of generality, let us assume Alice is in \ac{roo} $n=1$, thus $p_k^{1}$ are the legitimate distributions while $p_k^{2}$ are the under attack distributions.
%
Next, fix $P_{\rm FA}=\al$  in the \ac{det} curve of \eqref{eq:avg DET} and let
$\tau_{\rm fed}(\alpha)$ and $\tau_{\rm NP}(\alpha)$ 
the threshold for the \ac{llr} test attained by \ac{fedavg} and the \ac{np} test. 
Then the  bound in the \eqref{eq:final_bound} (at the top of the next page) between the corresponding  \ac{md} probability holds where
the \ac{md} probability of the \ac{np} test is denoted as $P_{\rm MD}^\star$ and that of \ac{fedavg} as $P^{\rm fed}_{\rm MD}$.


\end{thm}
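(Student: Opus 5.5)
The plan is to model what \ac{fedavg} converges to in the binary \ac{crc} with $K=2$. By \cite{8798661}, a sufficiently trained classifier on data from a distribution pair implements the \ac{np} test for that pair. For \ac{fedavg}, the global objective \eqref{eq:global_empirical_risk} is a $C_k/C$-weighted sum of local risks. Its population minimizer is therefore the \ac{llr} test for the mixture distributions
\begin{equation*}
\bar p^{n}=\sum_{j\in[2]}\tfrac{C_j}{C}\,p_j^{n},\qquad n\in[2].
\end{equation*}
At \ac{bs} $k$, \ac{fedavg} thus applies the rule $\Lambda_{\rm fed}(\bm x)=\log(\bar p^{2}/\bar p^{1})\gtrless\tau_{\rm fed}(\alpha)$. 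The threshold is calibrated so that $P_{\rm FA}=\alpha$ under $p_k^1$. The NP test instead applies $\Lambda_k=\log(p_k^2/p_k^1)\gtrless\tau_{\rm NP}(\alpha)$, also at level $\alpha$. Both tests are deterministic acceptance regions, which I denote $A_{\rm fed}$ and $A_{\rm NP}$ (the sets where $H_1$ is declared).

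\emph{Step 1 (equal false alarm).} Since both tests have level $\alpha$ under $p_k^1$, we have $p_k^1(A_{\rm fed})=p_k^1(A_{\rm NP})=1-\alpha$. This gives
\begin{equation*}
P^{\rm fed}_{\rm MD}-P^\star_{\rm MD}=p_k^2(A_{\rm fed})-p_k^2(A_{\rm NP})\ge 0,
\end{equation*}
where nonnegativity is the NP lemma.

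\emph{Step 2 (transfer to the mixture).} Write $p_k^n=\bar p^n+(p_k^n-\bar p^n)$. Then $|p_k^2(A)-\bar p^2(A)|\le \mathrm{TV}(p_k^2,\bar p^2)$ for every set $A$, and the same holds for $n=1$. Note that $A_{\rm fed}$ is the NP region for $(\bar p^1,\bar p^2)$ at the level $\alpha'=1-\bar p^1(A_{\rm fed})$, with $|\alpha'-\alpha|\le \mathrm{TV}(p_k^1,\bar p^1)$. Also, $\bar p^1(A_{\rm NP})$ differs from $1-\alpha$ by at most the same TV. So, up to a correction in level, $A_{\rm NP}$ is a competitor for the mixture NP problem. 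I will apply the NP lemma on the mixture pair, using the threshold $\tau_{\rm fed}$ to convert the level mismatch into probability through
\begin{equation*}
\bar p^2(A)\le e^{\tau_{\rm fed}}\,\bar p^1(A)\quad\text{on the rejection-boundary region}.
\end{equation*}
This yields $\bar p^2(A_{\rm fed})\le\bar p^2(A_{\rm NP})+e^{\tau_{\rm fed}(\alpha)}\,|\alpha'-\alpha''|$. Combining with Step 1 should give a bound of the form
\begin{equation*}
P^{\rm fed}_{\rm MD}-P^\star_{\rm MD}\le 2\,\mathrm{TV}(p_k^2,\bar p^2)+2e^{\tau_{\rm fed}(\alpha)}\,\mathrm{TV}(p_k^1,\bar p^1).
\end{equation*}
Using $\mathrm{TV}(p_k^n,\bar p^n)=\tfrac{C_{j}}{C}\mathrm{TV}(p_k^n,p_j^n)$ with $j\neq k$, this becomes an explicit expression in the inter-\ac{bs} discrepancies $\mathrm{TV}(p_1^n,p_2^n)$. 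The thresholds $\tau_{\rm fed}(\alpha)$ and $\tau_{\rm NP}(\alpha)$ can appear as the exponential weights. I will then average over $k$ and $n$ as in \eqref{eq:avg md-fa} to match \eqref{eq:final_bound}.

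\emph{Main obstacle.} The delicate step is Step 2, where the level mismatch must be handled. The NP lemma compares regions only at equal size under the null. Converting the $\alpha$-discrepancy into a miss-detection discrepancy needs a likelihood-ratio weight, $e^{\tau}$, which may be large if the threshold is extreme. My first attempt is the Lagrangian form of the NP lemma: for any $A$,
\begin{equation*}
\bar p^2(A_{\rm fed})-e^{\tau_{\rm fed}}\bar p^1(A_{\rm fed})\le \bar p^2(A)-e^{\tau_{\rm fed}}\bar p^1(A).
\end{equation*}
Substituting $A=A_{\rm NP}$ and the TV-perturbed measures gives the bound directly, with ties and randomization handled by allowing randomized tests.
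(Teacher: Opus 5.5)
Your argument is correct, and it takes a genuinely different and cleaner route than the paper.

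\textbf{The paper's route.} The paper works with the Lagrangian $Q(\delta,\tilde\tau)=\tilde\tau P_{\rm MD}(\delta)+P_{\rm FA}(\delta)$ of the true pair $(p_1^1,p_1^2)$, with $\tilde\tau=1/\tau$. It splits into the cases $\tilde\tau_{\rm fed}\le\tilde\tau_{\rm NP}$ and $\tilde\tau_{\rm fed}\ge\tilde\tau_{\rm NP}$, and in each case moves both Lagrangians to a common multiplier. It then writes the gap as an integral of $|f_{\rm NP}|$ over the disagreement set $\mathcal S=\mathcal S_1\cup\mathcal S_2$. It bounds $|f_{\rm NP}|\le|f_{\rm NP}-f_{\rm fed}|$ pointwise by the opposite-sign property. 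It closes with the triangle inequality and $TV(p_1^n,p^n)=\tfrac12 TV(p_1^n,p_2^n)$. The case split is what produces the factor $\max(1,\tau_{\rm fed}/\tau_{\rm NP})$.

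\textbf{Your route.} You apply the Lagrangian NP lemma to the \emph{mixture} pair, for which $A_{\rm fed}$ is exactly optimal, with $A_{\rm NP}$ as the competitor. The equal false-alarm constraint at BS $k$ then cancels the $p_k^1$ terms, leaving only set-wise TV perturbations. This needs no case analysis and never compares $f_{\rm NP}$ with $f_{\rm fed}$ pointwise. That comparison is delicate in the paper: in its expansion of $f_{\rm NP}-f_{\rm fed}$ the coefficient of $p^2$ is written as $\tilde\tau_{\rm NP}$, although $f_{\rm fed}$ carries $\tilde\tau_{\rm fed}$. That step is exact only when the two thresholds coincide. Your route also gives the sharper bound
\[
P^{\rm fed}_{\rm MD}\le P^\star_{\rm MD}+TV\big(p_1^2,p_2^2\big)+\tau_{\rm fed}\,TV\big(p_1^1,p_2^1\big).
\]
This is exactly the paper's Case~2 bound, now valid in both cases and free of $\tau_{\rm NP}$.

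\textbf{Remaining adjustments.} Three small points are needed to land on \eqref{eq:final_bound}.
\begin{itemize}
\item Take $C_1=C_2$, as the paper's equal-weight mixture does; otherwise your coefficient $2C_j/C$ can exceed $1$.
\item Identify your $e^{\tau_{\rm fed}}$ with the paper's $\tau_{\rm fed}$. The paper's threshold is on the ratio $p^2/p^1$ itself, not its logarithm, with acceptance when $p^2/p^1\le\tau$.
\item Drop the planned averaging over $k$ and $n$. The statement is a per-BS bound at level $\alpha$ at that BS. Simply note that $\max(1,\tau_{\rm fed}/\tau_{\rm NP})\ge 1$, so your per-BS bound already implies \eqref{eq:final_bound}. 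There is no need to make $\tau_{\rm NP}$ appear as an exponential weight.
\end{itemize}
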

\begin{IEEEproof}
The proof is provided in Appendix \ref{app: theorem_bound}.
\end{IEEEproof}

The bound demonstrates the intuition that the higher the mismatch between the distributions, the worse the performance. Interestingly, there is also a dependency on the thresholds $\tau_{\rm fed}$ and $\tau_{\rm NP}$ for the desired \ac{fa}. This can be explained in the following way: with the target region being $n=1$, as $\tau_{\rm fed} \rightarrow 0 $, the acceptance region grows, and \ac{fa} event never occurs. Consequently, the distributions that matter are the attacker ones $p_k^{2}$. A similar reasoning for $\tau_{\rm fed} \rightarrow \infty $: the \ac{md} never occurs, as the acceptance regions shrink, thus the only distributions that matter are the legitimate ones, i.e., $p_k^{1}$.

\begin{figure*}[t]
\centering
\begin{equation} \label{eq:final_bound}
{
P^{\rm fed}_{\rm MD}\le
\tau_{\rm fed} TV\big(p_1^1, p_2^{1} \big) + \max\left(1, \frac{\tau_{\rm fed}}{\tau_{\rm NP}}\right) \cdot \big[TV\big(p_1^2, p_2^{2} \big) +P_{\rm MD}^\star \big]
}
\end{equation}
\end{figure*}

\begin{figure}
    \centering
    \includegraphics[width=0.7\linewidth]{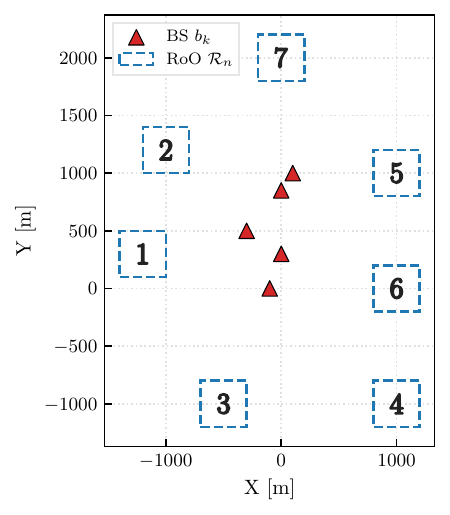}
    \caption{Simulated scenario with $K=5$ \acp{bs} with average distance $D_{\rm bs}=562 \si{\meter}$ and $N=7$ \acp{roo}.}
    \label{fig:sim_scenario}
\end{figure}

%
%


\section{Numerical Results}
\label{sec:setup}

To validate the effectiveness of \algo, we compare it against four baselines: Global, Single, \ac{fedavg}, and FedProx, while considering all possible transmission positions of both Alice and Trudy. 
In the Global setting, a ``virtual'' single \ac{bs} is assumed, with access to a dataset that aggregates the data from all \acp{bs}. This configuration represents the optimal case when the datasets are i.i.d., i.e., when the \acp{bs} are located sufficiently close to each other. Conversely, in the Single setting, each \ac{bs} trains exclusively on its own local dataset, regardless of its size. This approach is expected to perform better when the \acp{bs} are geographically distant, as sharing local information may negatively impact the overall performance. Finally, in the \ac{fedavg} and FedProx settings, the \acp{bs} collaboratively train the model using the \ac{fedavg} and FedProx \cite{nagia2022federated} algorithms, respectively. \remembertext{rev2.7}{
In particular, the FedProx algorithm is a local adaptation strategy where a penalty term is added in the local training to avoid the local model drifting too much from the global model. Moreover, we also study the performance of \algo when a \textbf{fixed} switching time is used, mimicking the behavior of two-stage federated learning algorithms presented in Sec. \ref{sec:intro}. Doing so, we compare the performance of \algo with two-stage federated learning algorithms present in the literature as well as local adaptation strategies. }

The scenario we used for our simulations, depicted in \Cref{fig:sim_scenario} has $K=5$ \acp{bs} that collaborate to classify signals coming from $N=7$ \acp{roo}. The areas are square regions of size $\ell_{\rm S} = 200 \si{\meter}$, while the \acp{bs} are located at random positions, 
\remembertext{rev1.3}{ 
ensuring a nonuniform
BS density and irregular BS placement and verify the generalization capabilities of \algo. The parameter $D_{\rm bs}$ controls the average distance between them, with a default value of $D_{\rm bs} = 563 \si{\meter}$.} We recall that in this paper, we are interested in typical IoT scenarios where nodes are constrained, thus only a few channel samples are available at each \acp{bs}. By default, each \ac{bs} has then available only $16$ data points per class; thus, each $b_k$ has only $ C_k=112$ samples available for training.

The scenario we used for our simulations, depicted in \Cref{fig:sim_scenario} has $K=5$ \acp{bs} that collaborate to classify signals coming from $N=7$ \acp{roo}. The areas are square regions of size $\ell_{\rm S} = 200 \si{\meter}$, while the \acp{bs} are located at random positions, 
\remembertext{rev1.3}{ 
ensuring a nonuniform
BS density and irregular BS placement and verify the generalization capabilities of \algo. The parameter $D_{\rm bs}$ controls the average distance between them, with a default value of $D_{\rm bs} = 563 \si{\meter}$.} We recall that in this paper, we are interested in typical IoT scenarios where nodes are constrained, thus only a few channel samples are available at each \acp{bs}. By default, each \ac{bs} has then available only $16$ data points per class; thus, each $b_k$ has only $ C_k=112$ samples available for training.

\subsection{Channel Model}

We now describe how the channel observations are generated. 
Let $\bm{s}_k\in\mathbb{R}^3$ denote the position of \ac{bs} $k$, and let $\bm{p}\in\mathbb{R}^3$ denote the position of the transmitter. 
The distance between them is $d_k(\bm{p}) = \lVert \bm{s}_k - \bm{p} \rVert$. We adopt a clustered frequency-selective channel model based on 3GPP specifications \cite{etsi2020study}. 
Even under line-of-sight conditions, the received signal consists of $L$ propagation paths. 
The \ac{aoa} of the $\ell$-th path is
$    \xi_{k,\ell}(\bm{p})
    =
    \xi_k(\bm{p}) + \Delta_\ell$, where $\Delta_\ell \sim \mathcal{U}(-\xi_{\mathrm{sp}},\xi_{\mathrm{sp}})$.

We assume \ac{ofdm} signaling with $S$ subcarriers. 
For subcarrier $s\in\{1,\ldots,S\}$, the received signal at \ac{bs} $k$ is
\begin{equation}
    \bm{y}_{k,s}(\bm{p})
    =
    \bm{h}_{k,s}(\bm{p})x
    +
    \bm{n}_{k,s},
\end{equation}
where $    \bm{n}_{k,s}\sim\mathcal{CN}(\bm{0},\sigma_n^2\bm{I})$.

The channel vector is modeled as
\begin{equation}
    \bm{h}_{k,s}(\bm{p})
    =
    \sqrt{\frac{\kappa}{\kappa+1}}
    \bm{h}^{\mathrm{LOS}}_{k,s}(\bm{p})
    +
    \sqrt{\frac{1}{\kappa+1}}
    \bm{h}^{\mathrm{NLOS}}_{k,s}(\bm{p}).
\end{equation}

The LOS component is
\begin{equation}
    \bm{h}^{\mathrm{LOS}}_{k,s}(\bm{p})
    =
    \sum_{\ell=1}^{L}
    \gamma_k(\bm{p})
    \bm{\beta}\!\left(\xi_{k,\ell}(\bm{p})\right)
    e^{-j2\pi s\Delta f\tau_{k,\ell}(\bm{p})},
\end{equation}
where the steering vector is
\begin{equation}
\bm{\beta}(\theta)
=
\frac{1}{\sqrt{M}}
\left[
1,
e^{j\pi\sin\theta},
\ldots,
e^{j\pi(M-1)\sin\theta}
\right]^\mathsf{T},
\end{equation}
and the pathloss coefficient is$
    \gamma_k(\bm{p})
    =
c/(4\pi f_c d_k(\bm{p}))$.

For subcarrier $s\in\{1,\ldots,S\}$, we denote the corresponding complex channel coefficient by
$    \bm{h}_{k,s}(\bm{p}) \in \mathbb{C}^{M}.
$

The \ac{cfr} observed collected at \ac{bs} $k$ is the column vector 
\begin{equation}
    \bm{x}_k(\bm{p})
    =
    \big[
        \bm{h}^T_{k,1}(\bm{p}),
        \bm{h}^T_{k,2}(\bm{p}),
        \ldots,
        \bm{h}^T_{k,S}(\bm{p})
    \big]^\mathsf{T}
    \in \mathbb{C}^{MS}.
\label{eq:channel_observation2}
\end{equation}

Assuming unit-power pilot transmissions, i.e.,
$    \mathbb{E}\{|x|^2\}=1,
$, the resulting \ac{snr} is
\begin{equation}
    \mathrm{SNR}
    =
    \frac{
        \mathbb{E}\!\left\{
            \lVert \bm{h}_{k,s}(\bm{p}) \rVert^2
        \right\}
    }{
        \mathbb{E}\!\left\{
            \lVert \bm{n}_{k,s} \rVert^2
        \right\}
    }.
\end{equation}

The expectation is taken over subcarriers and transmitter/receiver positions. If not otherwise stated, the default parameters we used to generate the channels are the following: a ${\rm SNR} = 10 \, \si{\decibel}$, $\xi_{\mathrm{sp}} = 6 \deg$ and $ c_{\rm DS} = 5 \si{ \nano \second} $, from 3GPP specifications \cite{etsi2020study}.

\begin{table}[t]
\centering
\caption{Summary of channel-model notation.}
\label{tab:channel_notation}
\begin{tabular}{lp{6.1cm}}
\toprule
\textbf{Symbol / Term} & \textbf{Description} \\
\midrule

$\bm{s}_k$ & \textbf{\ac{bs} Position:} Position of \ac{bs} $k$ in $\mathbb{R}^3$ \\

$\bm{p}$ & \textbf{Transmitter Position:} TX position \\

$d_k(\bm{p})$ & \textbf{\ac{bs}--TX Distance:} Distance between \ac{bs} $k$ and TX \\

$\bm{h}_{k,s}(\bm{p})$ & \textbf{SIMO Channel:} channel at $b_k$ for subcarrier $s$\\

$\bm{x}_k(\bm{p})$ & \textbf{\ac{cfr}:} Channel matrix across all subcarriers \\

$\xi_{k,\ell}(\bm{p})$ & \textbf{\ac{aoa}:} AoA of path $\ell$ at \ac{bs} $k$ \\

$\tau_{k,\ell}(\bm{p})$ & \textbf{Path Delay:} Delay of path $\ell$ \\

$\bm{\beta}(\theta)$ & \textbf{Steering Vector:} Array steering vector for AoA $\theta$ \\

$\gamma_k(\bm{p})$ & \textbf{Pathloss Coefficient:} Pathloss between TX and $b_k$ \\

$\kappa$ & \textbf{Rician Factor:} LOS-to-NLOS power ratio \\

$L$ & \textbf{Number of Paths:} Total propagation paths \\

$S$ & \textbf{Number of Subcarriers:} Total OFDM subcarriers \\

$\Delta f$ & \textbf{Subcarrier Spacing:} Frequency spacing \\

$\sigma_n^2$ & \textbf{Noise Variance:} Constant noise variance \\

$\bm{n}_{k,s}$ & \textbf{Noise Vector:} Noise at \ac{bs} $k$, subcarrier $s$ \\

\bottomrule
\end{tabular}
\end{table}


\subsection{Network Architecture}
In this work, we adopt a standard ResNet-18 \ac{cnn} backbone \cite{he2016deep}, augmented with a two-layer fully connected classification head. The input consists of two channels corresponding to the real and imaginary components of the received SIMO-OFDM signal \eqref{eq:channel_observation}, with dimensions $(2, M, S)$. The first convolutional layer is modified to accept two input channels while preserving the original kernel size and stride. The backbone extracts a 512-dimensional feature representation, which is fed into a classifier consisting of a fully connected layer that reduces the dimensionality from 512 to 256, followed by a ReLU activation, a dropout layer with a rate of 0.5, and a final linear layer mapping to $N$ output classes. The use of a well-established architecture ensures reproducibility and facilitates fair comparisons across methods. To make a fair comparison, grid search was employed to find the best hyperparameters across all the methods.
We trained all the methods for $T_{\rm E}=150$ communication rounds, using the \ac{msgd} optimizer and weight decay of $35 \times 10^{-5}$. We used an exponentially decaying learning rate, which is crucial for \algo to not depart from the \ac{fedavg} starting point. Table \ref{tab:hyperparameters} reports the other hyperparameters used for each method. Note that using a ResNet-18 may be computationally heavy for classical federated learning scenarios. However, unlike typical federated learning applications, the clients in our setting are \acp{bs}, which possess significantly stronger computational capabilities. Moreover, as shown in the next section, parameter sharing can be compressed, allowing for substantial bit savings in communication.


\begin{table}[t]
\centering
\caption{Training hyperparameters for different methods.}
\label{tab:hyperparameters}
\begin{tabular}{lccccc} 
\toprule
\textbf{Param.} & \textbf{Global} & \textbf{FedAvg} & \textbf{FedProx} &
\textbf{\algo} &
\textbf{Single}

\\
\midrule

Initial \ac{lr} & $10^{-1}$ & $10^{-2}$ & $10^{-2}$ 
& $10^{-2}$ & $10^{-1}$ 
\\
Batch size & 16 & 32 & 16 & 16 & 16 \\

Prox. term $\mu$ & -- & -- & 0.01 & -- & -- \\
\bottomrule
\end{tabular}
\end{table}

\subsection{Classification Performance}

We evaluate the classification performance of our methods by tracking the average error probability $P_{\rm E}$  across \acp{bs}: the lower, the better. This metric jointly reflects both security and attacker localization performance: the more accurate the classification of transmitting areas is, the more we are able to distinguish Alice from Trudy and, when applicable, to localize the attacker. 

\begin{figure*}[t]
    \centering

    \begin{subfigure}[b]{0.32\textwidth}
        \centering
        \begin{tikzpicture}
\begin{axis}[
     width=5.8cm,
    height=5.5cm,
    xlabel style={font=\scriptsize},
    ylabel style={font=\scriptsize},
    legend style={font=\scriptsize},
    legend style={font=\tiny},
    xlabel={Communication Rounds},
    ylabel={$P_{\rm E}$},
    grid=major,
    legend pos=north east,
]

\addplot[
    color=orange!85!black,
    mark=*,
    mark options={solid},
    very thick
] table[
    col sep=comma,
    x=epochs,
    y expr=\thisrow{fedloss}
] {images/journal/data/accuracy_vs_epochs_fixed.csv};
\addlegendentry{Dynamic Switch}

\addplot[
    color=teal!75!black,
    mark=pentagon*,
    very thick
] table[
    col sep=comma,
    x=epochs,
    y expr=\thisrow{fed_loss_0}
] {images/journal/data/accuracy_vs_epochs_fixed.csv};
\addlegendentry{Fixed: 10 epoch}

\addplot[
    color=green!60!black,
    mark=diamond*,
    very thick
] table[
    col sep=comma,
    x=epochs,
    y expr=\thisrow{fed_loss_1}
] {images/journal/data/accuracy_vs_epochs_fixed.csv};
\addlegendentry{Fixed: 20 epoch}

\addplot[
    color=red!75!black,
    mark=square*,
    very thick
] table[
    col sep=comma,
    x=epochs,
    y expr=\thisrow{fed_loss_2}
] {images/journal/data/accuracy_vs_epochs_fixed.csv};
\addlegendentry{Fixed: 30 epoch}

\addplot[
    color=violet!80!black,
    mark=triangle*,
    very thick
] table[
    col sep=comma,
    x=epochs,
    y expr=\thisrow{fed_loss_3}
] {images/journal/data/accuracy_vs_epochs_fixed.csv};
\addlegendentry{Fixed: 40 epoch}


\end{axis}
\end{tikzpicture}
        \caption{
        Learning curves with fixed switch times versus a dynamic switching policy.}
        \label{fig:learning_vs_fixed_switch}
    \end{subfigure}
    \hfill
    \begin{subfigure}[b]{0.32\textwidth}
        \centering
        \begin{tikzpicture}
\begin{axis}[
     width=5.8cm,
    height=5.5cm,
    xlabel style={font=\scriptsize},
    ylabel style={font=\scriptsize},
    legend style={font=\tiny},
    xlabel={$D_{\rm bs}$ [\si{\meter}]},
    ylabel={$P_{\rm E}$},
    grid=major,
    legend pos=north west,
]

\addplot[
    color=blue!75!black,
    mark=*,
    mark options={solid},
    very thick
] table[
    col sep=comma,
    x=distances,
    y expr=1-\thisrow{global}/100
] {images/journal/data/accuracy_vs_distance.csv};
\addlegendentry{Global}

\addplot[
    color=orange!85!black,
    mark=o,
    very thick
] table[
    col sep=comma,
    x=distances,
    y expr=1-\thisrow{fedloss}/100
] {images/journal/data/accuracy_vs_distance.csv};
\addlegendentry{\algo}

\addplot[
    color=green!60!black,
    mark=diamond*,
    very thick
] table[
    col sep=comma,
    x=distances,
    y expr=1-\thisrow{single}/100
] {images/journal/data/accuracy_vs_distance.csv};
\addlegendentry{Single}

\addplot[
    color=red!75!black,
    mark=square*,
    very thick
] table[
    col sep=comma,
    x=distances,
    y expr=1-\thisrow{fedavg}/100
] {images/journal/data/accuracy_vs_distance.csv};
\addlegendentry{FedAvg \cite{wang2021collaborative,yazdinejad2021federated}}

\addplot[
    color=violet!80!black,
    mark=triangle*,
    very thick
] table[
    col sep=comma,
    x=distances,
    y expr=1-\thisrow{fedprox}/100
] {images/journal/data/accuracy_vs_distance.csv};
\addlegendentry{FedProx \cite{nagia2022federated}}

\end{axis}
\end{tikzpicture}
        \caption{Probability of error $P_{\rm E}$ vs. \acp{bs} distance
        $D_{\rm bs}$.}
        \label{fig:acc_vs_distance}
    \end{subfigure}
    \hfill
    \begin{subfigure}[b]{0.32\textwidth}
        \centering
        \begin{tikzpicture}
\begin{axis}[
     width=5.8cm,
    height=5.5cm,
    xlabel style={font=\scriptsize},
    ylabel style={font=\scriptsize},
    legend style={font=\tiny},
    xlabel={SNR [\si{\decibel}]},
    ylabel={$P_{\rm E}$},
    grid=major,
    legend pos=north east,
]

\addplot[
    color=blue!75!black,
    mark=*,
    mark options={solid},
    very thick
] table[
    col sep=comma,
    x=snr,
    y expr=1-\thisrow{global}/100
] {images/journal/data/accuracy_vs_snr.csv};
\addlegendentry{Global}

\addplot[
    color=orange!85!black,
    mark=o,
    very thick
] table[
    col sep=comma,
    x=snr,
    y expr=1-\thisrow{fedloss}/100
] {images/journal/data/accuracy_vs_snr.csv};
\addlegendentry{\algo}

\addplot[
    color=green!60!black,
    mark=diamond*,
    very thick
] table[
    col sep=comma,
    x=snr,
    y expr=1-\thisrow{single}/100
] {images/journal/data/accuracy_vs_snr.csv};
\addlegendentry{Single}

\addplot[
    color=red!75!black,
    mark=square*,
    very thick
] table[
    col sep=comma,
    x=snr,
    y expr=1-\thisrow{fedavg}/100
] {images/journal/data/accuracy_vs_snr.csv};
\addlegendentry{FedAvg \cite{wang2021collaborative,yazdinejad2021federated}}

\addplot[
    color=violet!80!black,
    mark=triangle*,
    very thick
] table[
    col sep=comma,
    x=snr,
    y expr=1-\thisrow{fedprox}/100
] {images/journal/data/accuracy_vs_snr.csv};
\addlegendentry{FedProx \cite{nagia2022federated}}

\end{axis}
\end{tikzpicture}
        \caption{Probability of error vs. \ac{snr}.}
        \label{fig:acc_vs_snr}
    \end{subfigure}

    \caption{Learning performance and probability of error under different
    system conditions.}
    \label{fig:performance_comparison}
\end{figure*}

\remembertext{rev2.1}{

\smallskip
\noindent
\textbf{\emph{Error Probability With Fixed Switch Time:}} 
A novelty of the paper, besides the theoretical framework, is the dynamic switching time. To show its benefit, we see from Fig.~\ref{fig:learning_vs_fixed_switch} the performance of \algo compared to a fixed switch approach. We notice that the dynamic switch, happening around the 15th communication round, is beneficial for training and attains the lowest $P_{\rm E}$.
}

\smallskip
\noindent
\textbf{\emph{Error Probability vs. \ac{snr}:}} 
From \Cref{fig:acc_vs_snr}, we observe that increasing \ac{snr} consistently improves the overall performance across all methods, as expected. In this setting, \algo achieves the best performance among all considered approaches. We also note that, despite the limited number of training samples, \ac{fedavg} and FedProx reach performance comparable to the Global model at high \ac{snr} values, showing the effectiveness of federated training.

\smallskip
\noindent
\textbf{\emph{Error Probability vs. \acp{bs} Distance:}} Fig.~\ref{fig:acc_vs_distance} shows the error probability for different \acp{bs}' average distance $D_{\rm bs}$. We notice that the performance of Global, FedAvg, and FedProx collapses as distance increases, as the data across \acp{bs} becomes more non-i.i.d. This high data-heterogeneity is not captured by FedProx, which underperforms with respect to \algo that, thanks to the adaptive finetune phase, is more robust across the methods and achieves the best overall performance.

\begin{figure}
    \centering
\pgfplotstableread[col sep=comma]{images/journal/data/fl_bar_results.csv}\datatable

\begin{tikzpicture}
\begin{axis}[
    width=8cm,
    height=5cm,
    xlabel={Number of Samples},
    ylabel={$P_{\rm E}$},
    legend style={font=\scriptsize},
    grid=major,
    ybar,
    bar width=6pt,
    enlarge x limits=0.15,
    xtick=data,
    xticklabels from table={\datatable}{samples},
    legend pos=north east,
    ymin=0,
]

\addplot[
    color=blue!75!black,
    fill=blue!75!black,
    very thick
] table[
    col sep=comma,
    x expr=\coordindex,
    y expr=1-\thisrow{Global}/100
] {images/journal/data/fl_bar_results.csv};
\addlegendentry{Global}

\addplot[
    color=orange!85!black,
    fill=orange!85!black,
    very thick
] table[
    col sep=comma,
    x expr=\coordindex,
    y expr=1-\thisrow{FedSwitch}/100
] {images/journal/data/fl_bar_results.csv};
\addlegendentry{\algo}

\addplot[
    color=green!60!black,
    fill=green!60!black,
    very thick
] table[
    col sep=comma,
    x expr=\coordindex,
    y expr=1-\thisrow{Single}/100
] {images/journal/data/fl_bar_results.csv};
\addlegendentry{Single}

\addplot[
    color=red!75!black,
    fill=red!75!black,
    very thick
] table[
    col sep=comma,
    x expr=\coordindex,
    y expr=1-\thisrow{FedAvg}/100
] {images/journal/data/fl_bar_results.csv};
\addlegendentry{FedAvg \cite{wang2021collaborative,yazdinejad2021federated}}

\addplot[
    color=violet!80!black,
    fill=violet!80!black,
    very thick
] table[
    col sep=comma,
    x expr=\coordindex,
    y expr=1-\thisrow{FedProx}/100
] {images/journal/data/fl_bar_results.csv};
\addlegendentry{FedProx \cite{nagia2022federated}}

\end{axis}
\end{tikzpicture}
    \caption{Probability of error $P_{\rm E}$ vs. number of training samples $C_k$, equal for all \acp{bs}.}
    \label{fig:acc_vs_samples}
\end{figure}

\smallskip
\noindent
\textbf{\emph{Error Probability vs. Number of Samples:}} 
 \Cref{fig:acc_vs_samples} reports the average error probabilit $P_{\rm E}$ as a function of the total number of training samples $C_k$, same for all \acp{bs}. For small sample sizes (e.g., $C_k=28$ corresponds to 4 samples per class), the performance is primarily limited by data scarcity; in this regime, Global and \algo achieve the best results, while the Single approach performs poorly due to insufficient local data. As the number of samples increases, the bottleneck shifts from data scarcity to data heterogeneity. Although overall performance improves, the Single approach becomes increasingly competitive and eventually outperforms the single-optimum methods (i.e., Global and \ac{fedavg}). Across all regimes, \algo consistently achieves the best (or almost best) performance. In the subsequent simulations, we use $C_k=112$ samples for training, as in this regime, we have data scarcity yet enough data to have information sharing across \acp{bs}.

\begin{figure}
    \centering
    \begin{tikzpicture}
\begin{axis}[
    width=8cm,
    height=5cm,
    xlabel={Communication Rounds},
    legend style={font=\scriptsize},
    ylabel={$P_{\rm E}$},
    grid=major,
    legend pos=north east,
]

\addplot[
    color=blue!75!black,
    mark=*,
    very thick
] table[
    col sep=comma,
    x=epoch,
    y expr=1-\thisrow{fedloss_0}/100
] {images/journal/data/accuracy_vs_compression.csv};
\addlegendentry{$\mathrm{R}_k$=0.1}

\addplot[
    color=orange!85!black,
    mark=o,
    very thick
] table[
    col sep=comma,
    x=epoch,
    y expr=1-\thisrow{fedloss_1}/100
] {images/journal/data/accuracy_vs_compression.csv};
\addlegendentry{$\mathrm{R}_k$ = 0.3}

\addplot[
    color=green!60!black,
    mark=diamond*,
    very thick
] table[
    col sep=comma,
    x=epoch,
    y expr=1-\thisrow{fedloss_2}/100
] {images/journal/data/accuracy_vs_compression.csv};
\addlegendentry{$\mathrm{R}_k$= 1.0 (no compression)}

\end{axis}
\end{tikzpicture}
    \caption{Learning Curves vs. compression factor $\mathrm{R}_k$.}
    \label{fig:learning_vs_compression}
\end{figure}

\smallskip
\noindent
\textbf{\emph{Learning curves vs. Compression Factor $\mathrm{R}_k$}} 
Fig.~\ref{fig:learning_vs_compression} shows the learning curves of \algo under different compression factors $\mathrm{R}_k$. To obtain these curves, we applied top-$k$ sparsification, where for each layer $\ell$, only the local weights satisfying 
\begin{equation}
    |w_\ell| \ge P_{\mathrm{R}_k}(|\bm{w}_\ell|)
\end{equation}
are retained and transmitted to the central server. Here, $P_{\mathrm{R}_k}(\cdot)$ denotes the $\mathrm{R}_k$-th empirical quantile of the absolute weights in layer $\ell$. In other words, we transmit only the $\mathrm{R}_k$ percentage of top weights in terms of magnitude, and the remaining $(1 - R_k)$  percent of the weights is masked to zero. 

During the aggregation phase, the central server computes the element-wise average of these sparsified local models. Because each client independently zeroes out different parameters based on their local training, the global model update is formed by averaging only the most significant parameters of the \acp{bs}. This mechanism drastically reduces the communication bandwidth required per round while preserving the most salient features necessary for convergence. As \Cref{fig:learning_vs_compression} shows, \algo maintains strong robustness even under significant parameter compression. This behavior suggests that the learned representations and optimization dynamics of \algo are intrinsically resilient to perturbations and reduced numerical precision. 
We also observe that the compressed model with $\mathrm{R}_k=0.3$ can learn slightly faster. This is because having fewer weights effectively results in lighter models, which require fewer iterations to converge. However, at convergence, the maximum performance remains lower than in the no-compression case $\mathrm{R}_k=1$.


\subsection{Security Performance}

\begin{figure*}[h]

    \begin{subfigure}[t]{0.32\textwidth}
        \centering
        \begin{tikzpicture}
\begin{axis}[
    width=5.8cm,
    height=5.5cm,
    xlabel style={font=\scriptsize},
    ylabel style={font=\scriptsize},
    legend style={font=\scriptsize},
    xmode=log,
    ymode=log,
    xlabel={$\bar{P}_{\rm FA}$},
    ylabel={$\bar{P}_{\rm MD}$},
    grid=major,
    legend pos=south west,
    xmin=1e-2,
    xmax=1,
    ymin=1e-2,
    ymax=1,
]

\addplot[
    color=blue!75!black,
    mark=*,
    very thick
] table[
    x=avg_pfa_global_curve,
    y=avg_pmd_global_curve,
    col sep=comma
] {images/journal/data/roc_curves_dist_10.csv};
\addlegendentry{Global}

\addplot[
    color=orange!85!black,
    mark=o,
    very thick
] table[
    x=avg_pfa_fed_curve,
    y=avg_pmd_fed_curve,
    col sep=comma
] {images/journal/data/roc_curves_dist_10.csv};
\addlegendentry{\algo}

\addplot[
    color=green!60!black,
    mark=diamond*,
    very thick
] table[
    x=avg_pfa_single_curve,
    y=avg_pmd_single_curve,
    col sep=comma
] {images/journal/data/roc_curves_dist_10.csv};
\addlegendentry{Single}

\addplot[
    color=red!75!black,
    mark=square*,
    very thick
] table[
    x=avg_pfa_fed_avg_curve,
    y=avg_pmd_fed_avg_curve,
    col sep=comma
] {images/journal/data/roc_curves_dist_10.csv};
\addlegendentry{FedAvg \cite{wang2021collaborative,yazdinejad2021federated}}

\addplot[
    color=violet!80!black,
    mark=triangle*,
    very thick
] table[
    x=avg_pfa_fed_prox_curve,
    y=avg_pmd_fed_prox_curve,
    col sep=comma
] {images/journal/data/roc_curves_dist_10.csv};
\addlegendentry{FedProx \cite{nagia2022federated}}

\addplot[
    gray,
    dashed,
    thick
] coordinates {(1e-2,1e-2) (1,1)};

\end{axis}
\end{tikzpicture}
        \caption{\ac{det} curve with average \acp{bs} distance $D_{\rm bs} = 220\,\si{\meter}$.}
        \label{fig:DET_10}
    \end{subfigure}
    \hfill
    \begin{subfigure}[t]{0.32\textwidth}
        \centering
        \begin{tikzpicture}
\begin{axis}[
    width=5.8cm,
    height=5.5cm,
    legend style={font=\scriptsize},
    xlabel style={font=\scriptsize},
    ylabel style={font=\scriptsize},
    xmode=log,
    ymode=log,
    xlabel={$\bar{P}_{\rm FA}$},
    ylabel={$\bar{P}_{\rm MD}$},
    grid=major,
    legend pos=south west,
    xmin=1e-2,
    xmax=1,
    ymin=1e-2,
    ymax=1,
]

\addplot[
    color=blue!75!black,
    mark=*,
    very thick
] table[
    x=avg_pfa_global_curve,
    y=avg_pmd_global_curve,
    col sep=comma
] {images/journal/data/roc_curves_dist_50.csv};
\addlegendentry{Global}

\addplot[
    color=orange!85!black,
    mark=o,
    very thick
] table[
    x=avg_pfa_fed_curve,
    y=avg_pmd_fed_curve,
    col sep=comma
] {images/journal/data/roc_curves_dist_50.csv};
\addlegendentry{\algo}

\addplot[
    color=green!60!black,
    mark=diamond*,
    very thick
] table[
    x=avg_pfa_single_curve,
    y=avg_pmd_single_curve,
    col sep=comma
] {images/journal/data/roc_curves_dist_50.csv};
\addlegendentry{Single}

\addplot[
    color=red!75!black,
    mark=square*,
    very thick
] table[
    x=avg_pfa_fed_avg_curve,
    y=avg_pmd_fed_avg_curve,
    col sep=comma
] {images/journal/data/roc_curves_dist_50.csv};
\addlegendentry{FedAvg \cite{wang2021collaborative,yazdinejad2021federated}}

\addplot[
    color=violet!80!black,
    mark=triangle*,
    very thick
] table[
    x=avg_pfa_fed_prox_curve,
    y=avg_pmd_fed_prox_curve,
    col sep=comma
] {images/journal/data/roc_curves_dist_50.csv};
\addlegendentry{FedProx \cite{nagia2022federated}}

\addplot[
    gray,
    dashed,
    thick
] coordinates {(1e-2,1e-2) (1,1)};

\end{axis}
\end{tikzpicture}
        \caption{\ac{det} curve with average \acp{bs} distance $D_{\rm bs} = 563\,\si{\meter}$.}
        \label{fig:DET_50}
    \end{subfigure}
    \hfill
    \begin{subfigure}[t]{0.32\textwidth}
        \centering
        \begin{tikzpicture}

\begin{axis}[
    width=5.8cm,
    height=5.5cm,
    xlabel style={font=\scriptsize},
    ylabel style={font=\scriptsize},
    legend style={font=\scriptsize},
    xlabel={$P_{\rm FA}$},
    ylabel={$P_{\rm MD}$},
    grid=both,
    legend style={
        at={(0.02,0.02)},
        anchor=south west,
        fill=white,
        draw=black
    },
    legend image post style={
        only marks,
        mark options={black}
    },
    xmin=0.1,
    xmax=0.9,
    xmode = log,
    ymode = log,
    ymin=0,
    enlargelimits=false,
    line width=1pt,
    mark size=3pt
]


\pgfplotstableread[col sep=comma]{images/journal/data/bound_verifier_sigma_2.csv}\datatableA
\pgfplotstableread[col sep=comma]{images/journal/data/bound_verifier_sigma_1.1.csv}\datatableB


\addplot[
    blue,
    solid,
    mark=x
]
table[
    x=pfa,
    y={PMD Optimal}
] {\datatableA};

\addlegendentry{$P^\star_{\rm md}$: \ac{np} Local}

\addplot[
    blue,
    solid,
    mark=square
]
table[
    x=pfa,
    y={PMD Effective}
] {\datatableA};

\addlegendentry{$P^{\rm fed}_{\rm MD}$: \ac{np} Global}

\addplot[
    blue,
    solid,
    mark=o
]
table[
    x=pfa,
    y={Theoretical Bound}
] {\datatableA};

\addlegendentry{\Ac{fedavg} Bound}


\addplot[
    red,
    dashed,
    mark=x
]
table[
    x=pfa,
    y={PMD Optimal}
] {\datatableB};

\addplot[
    red,
    dashed,
    mark=square
]
table[
    x=pfa,
    y={PMD Effective}
] {\datatableB};

\addplot[
    red,
    dashed,
    mark = o
]
table[
    x=pfa,
    y={Theoretical Bound}
] {\datatableB};

\end{axis}
\end{tikzpicture}
        \caption{Verification of bound \eqref{eq:final_bound} under Gaussian distribution mismatch for $\sigma = 1$. The solid line has $\sigma_{\rm G} = 2$ while the dashed line has $\sigma_{\rm G} = 1.1$.}
        \label{fig:DET_verifier}
    \end{subfigure}
    
    \caption{Evaluation of the proposed detection method and theoretical bound.}
    \label{fig:DET_comparison}
\end{figure*}

In this section, we evaluate security performance in terms of average \ac{md} $\bar{P}_{\rm MD}$ and \ac{fa} $\bar{P}_{\rm FA}$ probabilities defined in \Cref{sub: perf_eval} via \ac{det} curves, which show the tradeoff between the two. In fact, the greater the \ac{fa}, the lower the \ac{md}, as expected. From \Cref{fig:DET_50}, we observe that the Global, \ac{fedavg}, and FedProx approaches are negatively impacted by data heterogeneity as $D_{\rm bs} = 563 \si{\meter}$,
which degrades detection performance. The Single approach struggles as well due to the limited amount of local training data, which restricts its ability to generalize. In contrast, \algo consistently achieves best performance, benefiting from the joint exploitation of local and global information, which effectively mitigates the effects of heterogeneity while improving the \ac{md}/\ac{fa} trade-off. In \Cref{fig:DET_10}, we show the \ac{det} computed at a different distance $D_{\rm bs}=220\,\si{\meter}$. 
Comparing it with \Cref{fig:DET_50}, we observe that the performance gap between the Single and the aggregated algorithms is reversed. 
This behavior arises because, when the \acp{bs} are closer to the transmitter, the observed data become more i.i.d., which improves the effectiveness of information sharing across \acp{bs}. Notably, \algo achieves the best performance even in this setting.


\smallskip
\noindent
\textbf{{Theoretical Bound Verification:}} To validate the tightness and correctness of the theoretical performance bound derived in Theorem \ref{th: theorem_bound}, we simulate a binary hypothesis testing problem using synthetic Gaussian data, for which the variational distances are computable in closed form as the underlying statistic distributions are known. We show results for $b_1$ assuming Alice is in region $n=1$ without loss of generality. The \ac{llr} at $b_1$ reaches is the pooled test
\begin{equation} \label{eq: pooled_np}
   \Lambda_{\rm G}= \frac{p_1^{1}+p_2^{1}}{p_1^{2}+p_2^{2}} = \frac{p^1}{p^2},
\end{equation}
as shown in Appendix B, which contains the mixture of the distributions. The optimal \ac{np} test is the local one, defined as
\begin{equation}
   \Lambda_{\rm L}= \frac{p_1^{1}}{p_1^{2}}\,.
\end{equation}

Assuming, as an example, the same under attack distributions we have that $p_1^{2} =p^{2}$ for both \acp{bs}, while we model the legitimate distributions as $p_1^{1} \sim \mathcal{N}(0, \sigma^2)$ and $p^1 \sim \mathcal{N}(0, \sigma_{\rm G}^2) $. This is an example where the \acp{bs} have the same under attack distributions while the legitimate ones are Gaussians with different variances. As shown in \Cref{fig:DET_verifier}, we have an empirical verification of the proposed bound in \eqref{eq:final_bound} under Gaussian distribution mismatch. In the simulation, we used $\sigma = 1$, the solid line has $\sigma_{\rm G} = 2$ while the dashed line has $\sigma_{\rm G} = 1.1$. Across all simulated target $P_{\rm FA}$ values, the empirically observed \ac{md} probability $P^{\rm fed}_{\rm MD}$ attained using \eqref{eq: pooled_np} follows the profile of the optimal curve   $P^\star_{\rm MD}$ and remains strictly below the analytical bound, confirming its validity. Moreover, we see that the greater the mismatch in the distributions, the worse the performance, as expected and correctly captured by the bound.

\section{Conclusions} \label{sec: conclusion}
In this paper, we presented \algo, a novel \ac{fl} framework designed to mitigate the challenges of non-IID datasets through adaptive local fine-tuning. By monitoring the training loss, \algo dynamically determines the optimal switching time between the federated aggregation and local fine-tuning phases. We provided a theoretical analysis demonstrating the structural limitations of \ac{fedavg} under non-IID conditions and proved how \algo successfully mitigates them. Numerical evaluations utilizing realistic, 3GPP-compliant channel models confirm that \algo outperforms both state-of-the-art federated and conventional non-federated baselines in terms of error probability.

\bibliographystyle{IEEEtran}
\bibliography{biblio}

\appendices

\section{Proof of Theorem \ref{th:switching}}
\label{app:switching}
Let us define $\{t \}$ as the index indicating the $t$-th step of \ac{msgd}, and $\mathcal{I}_E  \subseteq \{t \}$ as the steps where a communication round occurs, which happens every $E$ gradient steps. Thus we have that at step $t$-th step, at the \ac{bs} $k$ the model is
\begin{equation}
\bm{w}^{(t+1)}_{k}=\begin{cases}
\bm{w}^{(t)}_{k}
-
\eta^{(t)} \bm{m}_k^{(t)},
\qquad \text{if}\,\, t \notin \mathcal{I}_E
\\[0.8em]

 \frac{1}{K} \sum_{k \in [K]} \bm{w}^{(t)}_{k} 
-
\eta^{(t)} \bm{m}_k^{(t)} \,\text{otherwise}\,,
\end{cases}    
\end{equation}
where
\begin{equation}
    \bm{m}_k^{(t)} =  \beta \bm{g}_k^{(t)} + (1-\beta) \bm{m}_k^{(t-1)}\,,
\end{equation}
is the moment update.
Following the standard analysis of \ac{msgd} convergence, we define
\eas{
\bm{m}^{(t)} & =\frac{1}{K}\sum_{k \in [K]} \bm{m}_{t}^{k} \\
\bm{g}^{(t)} & = \frac{1}{K} \sum_{k\in [K]}\nabla \mathcal{L}(\bm{w}_k^t; \bm{H}^{(k,i)})  \\
\bm{w}^{(t)} & =\frac{1}{K}\sum_{k \in [K]}\bm{w}^{(t)}_{k}
}{\label{eq:averages}}
%
%
With the definitions in \eqref{eq:averages}, if $t \in \mathcal{I}_E$, we unfold $\bm{w}^{(t+1)}$ as
%
\begin{equation} \label{eq: dim_first}
\begin{split}
    \mathbb{E} &(\lVert \bm{w}^{(t+1)} -\bm{w}^{\star}_{\rm F} \lVert^2 )= \mathbb{E}(\lVert \bm{w}^{(t)}-\eta^{(t)} \bm{m}^{(t)}-\bm{w}^{\star}_{\rm F} \lVert^2 ) \\
    &= \mathbb{E}(\lVert \bm{w}^{(t)}-\eta^{(t)}( \beta \bm{g}^{(t)} + (1-\beta) \bm{m}^{(t-1)})-\bm{w}^{\star}_{\rm F}\lVert^2) \\
    & \leq \underbrace{\mathbb{E}(\lVert \bm{w}^{(t)}-\eta^{(t)} \beta \bm{g}^{(t)} -\bm{w}^{\star}_{\rm F}\lVert^2}_{\rm A} ) +  \underbrace{\mathbb{E}(\lVert \eta^{(t)} (1-\beta) \bm{m}^{(t-1)}\lVert^2}_{\rm C}) \\
\end{split}
\end{equation}
Let us define $\widetilde{\eta}^{(t)} = \beta \eta^{(t)}$: if $\widetilde{\eta}^{(t)} \le \frac{1}{4L}$. 

Assuming also $\widetilde{\eta}^{(t)}$ is non-increasing and $
\widetilde{\eta}^{(t)} \le 2\widetilde{\eta}^{(t+E)}
$,
for all $t \ge 0$, the inequality A is bounded from \cite{li2019convergence}, recalling that batch-training reduces the variance of the stochastic gradients by a factor of $B_{\rm s}^2$, as 

\begin{equation} \label{eq: bound A}
\begin{split}
\text{A}
\le
(1-\widetilde{\eta}^{(t)}\mu)\mathbb{E}\|\bm{w}^{(t)}-\bm{w}_{\rm F}^\star\|^2
+
(\widetilde{\eta}^{(t)})^{2} \frac{1}{K^2}\sum_{k=1}^{N}
 \sigma_k^2 \\
+
6L(\widetilde{\eta}^{(t)})^{2}\Gamma
+
8 (\widetilde{\eta}^{(t)})^{2}(E-1)^2G^2
\end{split}
\end{equation}

The quantity $\rm C$ in \eqref{eq: dim_first} is bounded using the next lemma, Lem. \ref{lemma: bound_c} as 
\begin{equation} \label{eq: bound C}
    {\rm{C}} \leq (\eta^{(t)})^2(1-\beta)^2G^2.
\end{equation}

\begin{lemma} \label{lemma: bound_c}
Under the conditions in the statement of Th. \ref{th:switching}, it holds that

    \begin{equation}
        \mathbb{E}(\lVert \eta^{(t)} (1-\beta) \bm{m}^{(t-1)}\lVert^2)\leq (\eta^{(t)})^2(1-\beta)^2G^2.
    \end{equation}
\end{lemma}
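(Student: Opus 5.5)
The approach is to show that the averaged momentum $\bm{m}^{(t-1)}$ in \eqref{eq:averages} is, at every step, a sub-convex combination of past averaged stochastic gradients. Jensen's inequality for $\lVert\cdot\rVert^2$ together with Assumption~4 then gives $\mathbb{E}\lVert \bm{m}^{(t-1)}\rVert^2 \le G^2$. The lemma follows by multiplying by the deterministic factor $(\eta^{(t)})^2(1-\beta)^2$, which can be pulled out of the expectation.

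\emph{Step 1 (gradient second moment).} First I would bound the averaged gradient. For each client $k$, the mini-batch gradient in \eqref{eq:mini_batch_gradient} is an average of $B_{\rm S}$ per-sample gradients. By convexity of $\lVert\cdot\rVert^2$, its squared norm is at most the average of the per-sample squared norms, so Assumption~4 gives $\mathbb{E}\lVert \bm{g}_k^{(t)}\rVert^2\le G^2$. Applying Jensen once more over the uniform average over $k\in[K]$ gives $\mathbb{E}\lVert \bm{g}^{(t)}\rVert^2 \le \frac{1}{K}\sum_k \mathbb{E}\lVert \bm{g}_k^{(t)}\rVert^2 \le G^2$.

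\emph{Step 2 (recursion for the averaged momentum).} The per-client update $\bm{m}_k^{(t)}=\beta \bm{g}_k^{(t)}+(1-\beta)\bm{m}_k^{(t-1)}$ is linear. Averaging over $k$ therefore gives $\bm{m}^{(t)}=\beta\bm{g}^{(t)}+(1-\beta)\bm{m}^{(t-1)}$. This holds at every step, including $t\in\mathcal{I}_E$, because the aggregation in the appendix acts on the weights and not on the momentum buffers. I would take the standard initialization $\bm{m}_k^{(-1)}=\bm{0}$.

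\emph{Step 3 (induction).} Next I would prove $\mathbb{E}\lVert\bm{m}^{(t)}\rVert^2\le G^2$ for all $t$ by induction. The base case is trivial. For the inductive step, $\bm{m}^{(t)}$ is a convex combination of $\bm{g}^{(t)}$ and $\bm{m}^{(t-1)}$ with weights $\beta$ and $1-\beta$, both in $[0,1]$. Convexity of the squared norm gives
\begin{equation*}
\lVert\bm{m}^{(t)}\rVert^2\le \beta\lVert\bm{g}^{(t)}\rVert^2+(1-\beta)\lVert\bm{m}^{(t-1)}\rVert^2 .
\end{equation*}
Taking expectations and using Step~1 and the induction hypothesis yields $\mathbb{E}\lVert\bm{m}^{(t)}\rVert^2\le \beta G^2+(1-\beta)G^2=G^2$.

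Equivalently, one can unroll the recursion as $\bm{m}^{(t)}=\sum_{j=0}^{t}\beta(1-\beta)^j\bm{g}^{(t-j)}$. The weights are nonnegative and sum to $1-(1-\beta)^{t+1}\le 1$, so Jensen applies directly.

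\emph{Main obstacle.} The only delicate point is that Assumption~4 is stated along the trajectory, with the index $t$ ranging over $1,\dots,T$. It must therefore be applied at the random iterates $\bm{w}_k^{(t)}$, uniformly in time, and not only at fixed points. I would read the assumption in exactly this uniform sense, as in \cite{li2019convergence}, so that the bound $G^2$ holds at every past step used in the convex combination.
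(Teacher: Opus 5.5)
Your proof is correct and follows the paper's skeleton: induction on $\mathbb{E}\|\bm{m}^{(t)}\|^2\le G^2$, then pulling out the deterministic factor $(\eta^{(t)})^2(1-\beta)^2$. The inductive step, however, rests on a different inequality, and yours is the one that actually holds.

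The paper bounds $\mathbb{E}\|\beta\bm{g}^{(t-1)}+(1-\beta)\bm{m}^{(t-2)}\|^2$ by $\mathbb{E}\|\beta\bm{g}^{(t-1)}\|^2+\mathbb{E}\|(1-\beta)\bm{m}^{(t-2)}\|^2$. This silently discards the cross term $2\beta(1-\beta)\mathbb{E}[\bm{g}^{(t-1)\top}\bm{m}^{(t-2)}]$. The paper then closes with $\beta^2+(1-\beta)^2\le 1$. That cross term need not be nonpositive. Conditionally on the past, $\bm{g}^{(t-1)}$ has mean equal to the averaged full gradient at the current iterates. That gradient can be positively correlated with the accumulated momentum, for instance when consecutive gradients point in similar directions.

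Your convexity step, $\|\beta\bm{a}+(1-\beta)\bm{b}\|^2\le\beta\|\bm{a}\|^2+(1-\beta)\|\bm{b}\|^2$, avoids the cross term entirely and closes with $\beta+(1-\beta)=1$. The paper's route would even give a sharper limiting constant, $\beta G^2/(2-\beta)$, but only under an uncorrelatedness assumption it never states. Yours gives exactly the claimed $G^2$ with no extra assumption.

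You also make explicit the two Jensen steps, over the mini-batch and over the $K$ clients, that the paper folds into ``from Assumption 4''. Your argument also does not depend on which momentum convention is used: the main text's $\beta\bm{m}+(1-\beta)\bm{g}$ and the appendix's $\beta\bm{g}+(1-\beta)\bm{m}$ both work. Nor does it depend on whether buffers are reset at each round, since every variant yields a sub-convex combination of past gradients.
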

\begin{IEEEproof}
The proof is by induction. For $t \leq 0$, the update $\bm{m}^{(t)}$ is zero. For $t=1$ we have that $m_1 = \beta g_1$, and from Assumption 4, its expected value is bounded by $\beta^2G^2$. Let us assume \eqref{eq: bound C} is valid for step $t-1$, we now demonstrate this implies it is valid for step $t$.

\begin{equation}
\begin{split}
    &\mathbb{E}(\lVert \eta^{(t)} (1-\beta) \bm{m}^{(t-1)}\lVert^2) \\  
    &=(\eta^{(t)})^2(1-\beta)^2\mathbb{E}(\lVert \bm{m}^{(t-1)}\lVert^2) \\
    & \leq (\eta^{(t)})^2(1-\beta)^2 \mathbb{E}(\beta \lVert \bm{g}^{(t-1)}+(1-\beta) \bm{m}^{(t-2)} \lVert^2)\\
    & \leq (\eta^{(t)})^2(1-\beta)^2 [\mathbb{E}( \lVert  \beta \bm{g}^{(t-1)}\lVert^2) + \mathbb{E}(\lVert (1-\beta) \bm{m}^{(t-2)}\lVert^2)] \\
    & \leq (\eta^{(t)})^2(1-\beta)^2[\beta^2G^2 + (1-\beta)^2 G^2] \leq (\eta^{(t)})^2(1-\beta)^2 G^2.
    \end{split}
\end{equation}
\end{IEEEproof}
Using the definition of $B$ in \eqref{eq: def_B}
and combining \eqref{eq: bound A} and \eqref{eq: bound C}, we have that \eqref{eq: dim_first} can be written as
\begin{equation}
    \mathbb{E}(\lVert \bm{w}^{(t+1)} -\bm{w}^{\star}_{\rm F} \lVert^2 ) \leq (1-\widetilde{\eta}^{(t)}\mu)\mathbb{E}(\|\bm{w}^{(t)}-\bm{w}^{\star}_{\rm F}\|^2 ) + (\widetilde{\eta}^{(t)})^{2} B\,,
\end{equation}
which has the same form as in \cite{li2019convergence}.
Consequently, with $\kappa = \frac{L}{\mu}$, $\gamma = \max\{8\kappa, E\}$ by fixing $\widetilde{\eta}^{(t)} = \frac{2}{\mu(\gamma + t)}$ we attain from \cite{li2019convergence} that at each communication time $e$, we have our theorem \eqref{eq:final_theorem}.

\section{Proof of Theorem \ref{th: theorem_bound}}
\label{app: theorem_bound}
%
%
The data distribution observed over the datasets of the \ac{bs} $k=\{1,2\}$ is
\eas{
\Dcal_k = \f 1 2 \lb p_{k}^{1}+p_{k}^{2} \rb
}{\label{eq:datasetndistributino}}
%
%


%
Let $\bm{w}_{\rm F}^{\star}$ and $\bm{w}_{\rm G}^{\star}$ be the optimal federated and global models, where the global model is trained on the union of the \acp{bs} data. The optimal \ac{fedavg} coincides with global training, in fact
\begin{equation} \label{eq: conv opt}
\begin{split}
            \bm{w}_{\rm F}^{\star} 
            &= \arg\min_{\bm{w}} \frac{1}{K}\sum_{k\in [K]} \,\mathcal{L}\bigl(\mathcal{D}_k; \bm{w}\bigr) \\
            &= \arg\min_{\bm{w}} \mathcal{L} \lb \bigcup_{k\in [K]} \mathcal{D}_k;\bm{w} \rb = \bm{w}_{\rm G}^{\star}\,.
\end{split}
\end{equation}
%
%
%
Under the  assumptions in Sec. \ref{sec:Assumptions} we have that \ac{fedavg} achieves the optimal LLR test  $\Lambda_{\rm fed}$ over the global data distribution in \eqref{eq: conv opt}, that is
%
\begin{equation} \label{eq:pooled_test}
\Lambda_{\rm G}=
%
\f {p_{1}^{1}+p_{2}^{1}}{p_{1}^{2}+p_{2}^{2}}=\f {p^1} {p^2} \,.
\end{equation}
For a given \ac{fa}, let us define the MD probability $P_{\rm \ac{md}}$. Let $\delta_{\rm fed}$ be the acceptance region of the null hypothesis in the global model. 
On the other hand, the NP test attains the optimal test at each \ac{bs} as $
\Lambda_{{\rm NP}, k} = p_k^{1} /p_k^{2}$.
Accordingly, let  $\delta_{{\rm NP}, k}$ the acceptance region of the  hypothesis $n=1$  \ac{bs} $k$.


%

In the following, let's focus on the case $k=1$, that is on the performance of the \ac{bs} $1$.
The derivation is symmetric for the case $k=2$.
The MD probabilities in the global model and the personalized models are obtained as
\eas{
    P^{\rm fed}_{\rm MD}& =P \lb x \in \delta_{\rm fed} | x \sim p_1^2 \rb \\
    P_{\rm MD}^\star& =P \lb x \in \delta_{\rm NP} | x \sim p_1^2 \rb \,,
}{\label{eq:opt_pmd}}
i.e., the probability that samples belonging to \ac{roo} 2 for \ac{bs} 1 are classified as coming from \ac{roo} 1.
%

\noindent
\textbf{Bounding the Performance Gap via Total Variation:}
%
 In general, having a constraint on the \ac{fa} and minimizing the \ac{md} is equivalent to minimizing the uncostrained Lagrangian objective (with $k=\{\rm fed,NP\}$)
\begin{equation} \label{eq:objective_function}
Q(\delta_k,\tilde{\tau}_k) = \tilde{\tau}_k P_{\rm MD}(\delta_k) + P_{\rm \ac{fa}}(\delta_k)\,, 
\end{equation}
where $\delta_k$ is the {decision region} under the hypothesis $H_0$, and $\tilde{\tau}_k=\frac{1}{\tau_k}$  is the reciprocal of the threshold of the \ac{np} test for a desired \ac{fa} $\tau_k$. Thus, both fed and NP minimize \eqref{eq:objective_function} according to \emph{their data distributions}, thus they will end up with different regions and thresholds to match the \ac{fa} and then minimize the \ac{md}.  

Note that we can compactly write \eqref{eq:objective_function} recalling that the distributions that client 1 experiences are $ p^2_1(x)$ and $ p_1^1(x)$ as 
\begin{equation} \label{eq:obj_expanded}
    Q(\delta_k,\tilde{\tau}_k)=1+\int_{\delta_k} \tilde{\tau}_k  p^2_1(x)-p_1^1(x) \, \,.
\end{equation}

We want to compute the optimality gap between the performance of fed and NP $
    \Delta = P_{\rm MD}(\delta_{\rm fed}) - P_{\rm MD}(\delta_{\rm NP})$.
Let $f_{\rm NP}(x) = p_1^1(x) - \tilde{\tau}_{\rm NP} p_1^2(x) $
 be the optimal decision function for the distributions $(p_1^1(x), p^2_1(x))$ which in fact induces the region that minimizes \eqref{eq:obj_expanded} as it is designed for the "correct distributions" 
 \begin{equation}
 \delta_{\rm NP} = \lcb x : \frac{p_1^1(x)}{p_1^2(x)} \ge \tilde{\tau}_{\rm NP} \iff f_{\rm NP}(x)  \ge 0\rcb
 \end{equation}
 Similarly, let us define $f_{\rm fed}(x) = p^1(x) - \tilde{\tau}_{\rm fed} p^2(x)$ as the federated decision function which induces the region 
\begin{equation}
     \delta_{\rm fed} = \lcb x : \frac{p^{1}(x)}{p^{2}(x)} \ge \tilde{\tau}_{\rm fed} \iff f_{\rm fed}(x)  \ge 0\rcb \,.  
\end{equation}

\subsubsection*{1. Decomposition via Symmetric Difference}

Let us focus first on the difference between the objectives of the models, recalling that by hypothesis they induce the same \ac{fa}, thus ($P_{\rm \ac{fa}}, (\delta_{\rm fed})=P_{\rm \ac{fa}}(\delta_{\rm NP})$ which is
\begin{equation}\label{Q27}
    Q(\delta_{\rm fed},\tilde{\tau}_{\rm fed}) - Q(\delta_{\rm NP},\tilde{\tau}_{\rm NP}) =     \tilde{\tau}_{\rm fed}\Delta +(\tilde{\tau}_{\rm fed}-\tilde{\tau}_{\rm NP})P_{\rm MD}^\star\,. 
\end{equation}

We proceed now with two cases:

\paragraph*{Case 1: $\tilde{\tau}_{\rm fed} \leq \tilde{\tau}_{\rm NP}$} We see from \eqref{eq:obj_expanded} that we can bound
\begin{equation}
    Q(\delta_{\rm fed},\tilde{\tau}_{\rm fed}) - Q(\delta_{\rm NP},\tilde{\tau}_{\rm NP})\leq Q(\delta_{\rm fed},\tilde{\tau}_{\rm NP}) - Q(\delta_{\rm NP},\tilde{\tau}_{\rm NP})\,. 
\end{equation}
As they have the same integrand, the difference above depends only on the integration regions.
Let us define $\mathcal{S} = \mathcal{S}_1 \cup \mathcal{S}_2$ where $\mathcal{S}_1= (\delta_{\rm NP}\setminus \delta_{\rm fed}) $ and $\mathcal{S}_2=(\delta_{\rm fed} \setminus \delta_{\rm NP})$ as the region where the decision regions of $\delta_{\rm NP}$ and $\delta_{\rm fed}$ are different, as shown in Fig. \ref{fig:region_sets}. Then we can write
\begin{align}
&Q(\delta_{\rm fed},\tilde{\tau}_{\rm NP}) - Q(\delta_{\rm NP},\tilde{\tau}_{\rm NP})=\\
 &=\int_{\mathcal{S}_1} f_{\rm NP}(x) \, dx + \int_{\mathcal{S}_2} (-f_{\rm NP}(x)) \, dx 
= \int_{\mathcal{S}} |f_{\rm NP}(x)| \, dx\,,
\end{align}
as both terms are non-negative because $f_{\rm NP}(x) \ge 0$ inside $\delta_{\rm NP}$ (first integral) and $f_{\rm NP}(x) < 0$ outside it (second integral).

\begin{figure}
    \centering
    \resizebox{0.7\columnwidth}{!}{%
        \usetikzlibrary{patterns}
\begin{tikzpicture}[scale=1.2]

\def\NPblob{(0.5,1.5) .. controls (-1,2) and (-2,0) .. (-1.5,-1) .. controls (-1,-2) and (1,-1.5) .. (2,-1) .. controls (3,0) and (2,1) .. (0.5,1.5)}

\def\fedblob{(2,2) .. controls (0.5,2.5) and (0,1) .. (0.5,0) .. controls (1,-1) and (2.5,-2) .. (4,-1) .. controls (5.5,0) and (4,1.5) .. (2,2)}

\begin{scope}
    \clip \NPblob;
    \fill[pattern=north east lines, pattern color=blue!30] \fedblob [even odd rule] \NPblob;
\end{scope}

\begin{scope}
    \clip \fedblob;
    \fill[pattern=north west lines, pattern color=red!30] \NPblob [even odd rule] \fedblob;
\end{scope}

\draw[thick, blue] \NPblob;
\draw[thick, red] \fedblob;


\node[align=center, font=\footnotesize] at (1.3,0.3) {
    $\delta_{\rm NP} \cap \delta_{\rm fed}$ \\
    $f_{\rm NP} \ge 0$ \\
    $f_{\rm fed} \ge 0$
};

\node[blue!80!black, align=center, font=\footnotesize] at (-0.3,0) {
    $\mathcal{S}_1$ \\
    $f_{\rm NP} \ge 0$ \\
    $f_{\rm fed} < 0$
};

\node[red!80!black, align=center, font=\footnotesize] at (3.2,0) {
    $\mathcal{S}_2$ \\
    $f_{\rm fed} \ge 0$ \\
    $f_{\rm NP} < 0$
};

\node[font=\footnotesize, fill=white, inner sep=1pt] at (1.5,-2.2) {
    Outside: $f_{\rm NP} < 0$ and $f_{\rm fed} < 0$
};

\node[blue, font=\bfseries] at (-1, 1.8) {$\delta_{\rm NP}$};
\node[red, font=\bfseries] at (3.8, 1.8) {$\delta_{\rm fed}$};

\begin{scope}[shift={(-2,-3)}]
    \draw[pattern=north east lines, pattern color=black] (0,0) rectangle (0.4,0.3);
    \node[right, font=\scriptsize] at (0.4,0.15) {Disagreement Regions $\mathcal{S}_1$ and $\mathcal{S}_2$};
    
    \draw[thick, blue] (4,0.3) -- (4.4,0.3) node[right, black, font=\tiny] {$f_{\rm NP} = p_1^{1} - \tilde{\tau}_{\rm NP} p_1^{2}$};
    \draw[thick, red] (4,0) -- (4.4,0) node[right, black, font=\tiny] {$f_{\rm fed} = p^{1} - \tilde{\tau}_{\rm fed} p^2$};
\end{scope}

\end{tikzpicture}
    }
    \caption{Decision regions of federated and NP tests.}
    \label{fig:region_sets}
\end{figure}

\subsubsection*{2. The Magnitude Inequality}

On the disagreement region $\mathcal{S}$, the functions $f_{\rm NP}(x)$ and $f_{\rm fed}(x)$ have opposite signs. For any two real numbers $a, b$ with opposite signs, it holds that $|a| \le |a - b|$. Applying this point-wise:
\begin{equation} \label{eq:diff_bound}
|f_{\rm NP}(x)| \le |f_{\rm NP}(x) - f_{\rm fed}(x)|, \quad \forall x \in \mathcal{S}.
\end{equation}
Thus, from \eqref{Q27} and \eqref{eq:diff_bound}, we have
\begin{equation}
\tilde{\tau}_{\rm fed}\Delta +(\tilde{\tau}_{\rm fed}-\tilde{\tau}_{\rm NP})P_{\rm MD}^\star \le \int_{\mathcal{S}} |f_{\rm NP}(x) - f_{\rm fed}(x)| \, dx.
\end{equation}
By applying the triangle inequality, we have 
\begin{equation}
\begin{split}
        &|f_{\rm NP}(x) - f_{\rm fed}(x)| \\ 
        &= |p_1^1 - p^1 + \tilde{\tau}_{\rm NP}  (p^2-p_1^2)| \\
        &\leq |p_1^1 - p^1| + \tilde{\tau}_{\rm NP} | (p^2-p_1^2)|.
\end{split} 
\end{equation}

\subsubsection*{3. Relation to Total Variation (TV) Distance}

Finally, we have, by definition of TV distance, 
\begin{align}
&\tilde{\tau}_{\rm fed}\Delta +(\tilde{\tau}_{\rm fed}-\tilde{\tau}_{\rm NP})P_{\rm MD}^\star \\ &\le \int_{\mathcal{S}} |p_1^1 - p^1| \, dx + \tilde{\tau}_{\rm NP} \int_{\mathcal{S}} |p_1^2 - p^2| \, dx \notag \\
&= 2 TV\big(p_1^1, p^1\big) + 2\tilde{\tau}_{\rm NP} TV\big(p_1^2, p^2\big).
\end{align}

Recalling that $p^1 = \frac{1}{2}(p^1_1 + p^1_2)$ and $p^2 = \frac{1}{2}(p^2_1 + p^2_2)$ are equal-weight mixtures, so by direct expansion of the $TV$ definition, $ TV(p^1_1, p^1) = \frac{1}{2} TV(p^1_1, p^1_2)$ and $TV(p^2_1, p^2) = \frac{1}{2} TV(p^2_1, p^2_2)$ we attain
\begin{equation}
\begin{split}
    P_{\rm MD}(\delta_{\rm fed}) \le \frac{1}{\tilde{\tau}_{\rm fed}} TV\big(p_1^1, p_2^{1} \big) +  \frac{\tilde{\tau}_{\rm NP}}{\tilde{\tau}_{\rm fed}} TV\big(p_1^2, p_2^{2} \big)+ \\
    (1+\frac{\tilde{\tau}_{\rm NP}-\tilde{\tau}_{\rm fed}}{\tilde{\tau}_{\rm fed}} )P_{\rm MD}^\star\,.
\end{split}
\end{equation}

\paragraph*{Case 2: $\tilde{\tau}_{\rm fed} \ge \tilde{\tau}_{\rm NP}$}
\begin{equation}
    Q(\delta_{\rm fed},\tilde{\tau}_{\rm fed}) - Q(\delta_{\rm NP},\tilde{\tau}_{\rm NP})\leq Q(\delta_{\rm fed},\tilde{\tau}_{\rm fed}) - Q(\delta_{\rm NP},\tilde{\tau}_{\rm fed})\,. 
\end{equation}
and with a similar reasoning, considering that now $(\tilde{\tau}_{\rm fed}-\tilde{\tau}_{\rm NP})P_{\rm MD}^\star \geq 0$, thus it can be discarded, we attain the bound 

\begin{equation}
\begin{split}
    P_{\rm MD}(\delta_{\rm fed}) & \leq
\frac{2}{\tilde{\tau}_{\rm fed}} TV\big(p_1^1, p^1\big) + 2 TV\big(p_1^2, p^2\big) + P_{\rm MD}^\star \\
& =
\frac{1}{\tilde{\tau}_{\rm fed}} TV\big(p_1^1, p_2^{1} \big) +  TV\big(p_1^2, p_2^{2}\big) + P_{\rm MD}^\star
\end{split}
\end{equation}
\subsubsection*{4. Final Performance Bound}
Combining both cases, and recalling that $\tau_k=\frac{1}{\tilde{\tau}_k}$, $k=\{{\rm fed, NP}\}$, we attain \eqref{eq:final_bound}. 

\clearpage

\end{document}